\documentclass[submission,copyright,creativecommons]{eptcs}

\usepackage{amsmath}
\usepackage{eps}
\usepackage{amsthm}
\usepackage{amssymb}
\usepackage{complexity}
\usepackage{thmtools}
\let\NE\relax
\let\SC\relax

\usepackage{subcaption}
\captionsetup{compatibility=false}

\newcommand{\States}{\mathsf{States}}
\newcommand{\Final}{\mathtt{F}}
\newcommand{\Agt}{\mathsf{Agt}}
\newcommand{\Act}{\mathsf{Act}}
\newcommand{\Tab}{\mathsf{Tab}}
\newcommand{\Dist}{\mathit{Dist}}
\newcommand{\Allow}{\mathsf{Allow}}
\newcommand{\Supp}{\mathit{Supp}}
\newcommand{\last}{\mathit{last}}
\newcommand{\first}{\mathit{first}}
\newcommand{\NE}{Nash Equilibrium}
\newcommand{\Esp}{\mathbb{E}}
\newcommand{\pr}{\mathbb{P}}
\newcommand{\SC}{\mathrm{SC}}
\newcommand{\exit}{\mathrm{Exit}}
\newcommand{\BR}{\mathrm{BR}}

\newcommand{\argmax}{\mathrm{argmax}}
\newcommand{\comp}[3]{#1\left\langle#2\right\rangle_{-#3}}
\newcommand{\dg}[2]{\widehat{#1}^{#2}}

\def\calG{\mathcal G}
\def\calA{\mathcal A}
\let\epsilon\varepsilon

\newtheorem{tdefinition}{Definition}
\newtheorem{tcorollary}{Corollary}
\newtheorem{tlemma}{Lemma}
\newtheorem{ttheorem}{Theorem}
\newtheorem{tproposition}{Proposition}
\newtheorem{example}{Example}

\newtheorem{remark}{Remark}

\makeatletter
\let\c@tlemma\c@ttheorem
\let\c@tdefinition\c@ttheorem
\let\c@tproposition\c@ttheorem
\let\c@tproposition\c@ttheorem
\let\c@tcorollary\c@ttheorem
\makeatother

\usepackage{tikz}
\usetikzlibrary{calc,shapes,arrows,chains,positioning,automata}

\tikzset{rn/.style={circle,draw, node distance=2cm}}
\tikzset{stoch/.style={fill=black,minimum size=1pt,circle}}
\tikzset{final/.style={rectangle,draw, node distance=2cm}}
\tikzset{annotate/.style={node distance=0.9cm}}
\tikzset{entry/.style={initial by arrow, initial text=, initial where=above}}

\colorlet{drouge}{red}
\colorlet{frouge}{red!20!white}
\colorlet{dbleu}{blue}
\colorlet{fbleu}{blue!30!white}
\colorlet{fbleuc}{blue!20!white}
\colorlet{dviolet}{blue!50!red}
\colorlet{fviolet}{blue!50!red!40!white}
\colorlet{dvert}{green!80!black}
\colorlet{fvert}{green!80!black!20!white}
\colorlet{djaune}{yellow!80!black}
\colorlet{fjaune}{yellow!80!black!20!white}
\colorlet{dgris}{white!60!black}
\colorlet{fgris}{white!90!black}
\colorlet{dgrisf}{white!30!black}
\colorlet{fgrisf}{white!70!black}

\tikzstyle{ptrond}=[draw,circle,minimum height=2mm]
\tikzstyle{ptcarre}=[draw,minimum width=3mm,minimum height=3mm]
\tikzstyle{moyrond}=[draw,circle,minimum height=5mm]
\tikzstyle{moycarre}=[draw,minimum width=4mm,minimum height=4mm]
\tikzstyle{rond}=[draw,circle,minimum height=7mm]
\tikzstyle{carre}=[draw,minimum width=6mm,minimum height=6mm]
\tikzstyle{rouge}=[draw=drouge,fill=frouge]
\tikzstyle{vert}=[draw=dvert,fill=fvert]
\tikzstyle{jaune}=[draw=djaune,fill=fjaune]
\tikzstyle{bleu}=[draw=dbleu,fill=fbleu]
\tikzstyle{violet}=[draw=dviolet,fill=fviolet]
\tikzstyle{gris}=[draw=dgris,fill=fgris]
\tikzstyle{grisf}=[draw=dgrisf,fill=fgrisf]
\tikzstyle{rvert}=[style=rond,style=vert]
\tikzstyle{rrouge}=[style=rond,style=rouge]

\title{Stochastic Equilibria under Imprecise Deviations in
  Terminal-Reward Concurrent Games\thanks{This work is partly
    supported by ERC project EQualIS (308087) and by FP7 project Cassting
    (601148).}}

\author{Patricia Bouyer
\qquad
 Nicolas~Markey
\qquad
 Daniel Stan
\institute{LSV,  CNRS \& ENS Cachan, Universit\'e Paris-Saclay, France}
}

\begin{document}
\maketitle

\begin{abstract}
We study the existence of mixed-strategy equilibria in concurrent games played
on graphs. While existence is guaranteed with safety objectives for each
player, Nash equilibria need not exist when players are given arbitrary
terminal-reward objectives, and their existence is undecidable with
qualitative reachability objectives (and only three players). However, these
results rely on the fact that the players can enforce infinite plays while
trying to improve their payoffs. In~this paper, we~introduce a relaxed notion
of equilibria, where deviations are imprecise.
We~prove that contrary to Nash equilibria, such (stationary) equilibria always
exist, and we develop a \PSPACE\ algorithm to compute~one.

\end{abstract}

\section{Introduction}

Games (especially games played on graphs) are a prominent formalism
for modelling and reasoning about interactions between components of 
computerized systems~\cite{thomas02,henzinger05}. Until recently,
those games have mainly been studied in the special case where only
two players are interacting and have opposite
objectives. This setting is especially relevant for modelling reactive
systems evolving in a presumably hostile environment.
Over the last decade, multi-player games with non-zero-sum objectives
have come into the picture: they~allow for conveniently modelling
complex infrastructures where each individual system tries to fulfill
its own objectives, while still being subject to interactions with the
surrounding systems. As an example, consider (a~simplified version~of)
the team-formation problem~\cite{CKPS10}, as depicted in
Fig.~\ref{fig:team}: several agents are trying to complete tasks; each
task requires some resources, which are shared by the
players. Completing a task thus requires the formation of a team that
has all the required resources for that task: each player selects the
task she wants to achieve (and so proposes her resources for achieving
that task), and if a task receives enough resources, the associated
team receives the corresponding payoff (to~be divided among the
players in the team). In~such a game, there is a need of cooperation
(to~gather enough resources), and an incentive to selfishness
(to~maximise the payoff).

\begin{figure}[tb]
  \centering
  \begin{tikzpicture}
    \draw (0,0) node[draw,circle,minimum width=.6cm] (A) {};
    \draw (-2,-1) node[draw,rounded corners=1.5pt] (B) {$\frac{1}{2},\frac{1}{2}$};
    \draw (2,-1) node[draw,rounded corners=1.5pt](C) {$\vphantom{\frac12}1,0$};
    \draw [-latex'] (A) -- (B) node [midway,above left=-5pt]
    {$\begin{array}{c} A_1\to T_1, A_2\to T_1 \\ A_1\to T_2, A_2\to T_2 \end{array}$};
    \draw [-latex'] (A) -- (C) node [midway,above right=-5pt]
    {$\begin{array}{c} A_1\to T_1, A_2\to T_2 \\ A_1\to T_2, A_2\to T_1 \end{array}$};
    
    \path (7.5,0.2) node {\begin{tabular}[t]{@{}l} player $A_1$ has resources
        $\{r_1,r_2,r_3\}$ \\ player $A_2$ has resources $\{r_2,r_3\}$ \end{tabular}};
    \path (7.5,-.8) node {\begin{tabular}[t]{@{}l} task $T_1$ requires resources
        $\{r_1,r_2\}$ \\ task $T_2$ requires resources $\{r_1,r_3\}$ \end{tabular}};
  \end{tikzpicture}
  \caption{An instance of the team-formation
    problem~\cite{CKPS10}. For any deterministic choice of actions,
    one of the players has an incentive to change her choice: there is
    no pure Nash equilibrium. However there is one mixed Nash
    equilibrium, where each player plays $T_1$ and~$T_2$ uniformly at
    random.}\label{fig:team}
\end{figure}
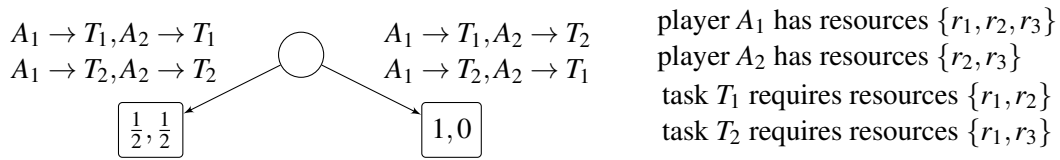

In~that setting, focusing only on optimal strategies for one single
agent is not relevant. In~game theory, several solution concepts have
been defined, which more accurately represents rational behaviours of
these multi-player systems; Nash equilibrium~\cite{nash50} is the
best-known such concept.  A~Nash equilibrium is a strategy profile
(that~is, one~strategy to each player) where no player can improve her
own payoff by unilaterally changing her strategy.  In~other terms, in
a~Nash equilibrium, each individual player has a satisfactory strategy
with regards to the other players'~strategies. Notice that Nash
equilibria need not exist (except for some classes of games) nor be
unique, and they are not necessarily ``optimal'': Nash equilibria
where all players lose may coexist with other Nash equilibria with
positive payoffs. Many other concepts do exist, which refine the
notion of Nash equilibria (like subgame-perfect
equilibria~\cite{selten65} or trembling-hand
equilibria~\cite{selten75}), or relax the notion (like $\epsilon$-Nash
equilibria~\cite{CJM04}).
The existence and computation of (constrained) equilibria (for various
concepts) are important problems in the area, for which many results
have been recently obtained.

In particular, in a recent paper~\cite{BMS14}, we proved that the
existence of Nash equilibria in randomized strategies is undecidable
in deterministic concurrent games with terminal-reward (while the
problem is decidable for pure strategies~\cite{BBMU15}). Those games
are concurrent games played on graphs, with terminal nodes assigning a
reward to every player. The undecidability result holds for three
players or more, and the status of two-player games is open: it is not
known whether there always exists a Nash equilibrium in two-player
concurrent games, even when the terminal rewards are in~$\{0,1\}$ (which
corresponds to a reachability objective).

In order to circumvent this undecidability result, we~consider in this paper a
relaxed version of Nash equilibria, 
with a stronger notion of \emph{profitable deviation}.
A~deviation is
called \emph{really-profitable} only if all the ``neighbouring'' deviations (with
small changes in the probability distribution) remain profitable (in~the
standard sense). 
In this paper, we prove that under this restriction, such equilibria always
exist, even for concurrent games with stochastic states.
We~also show that \emph{stationary} equilibria exist, and provide an algorithm
to compute~one.

To prove the existence result, we show that the notion of imprecise
deviations is captured by adding constraints to the set of
strategy profiles one can~use. This~allows to show the convexity of
the set of best responses to a given strategy profile, as well as a
terminating property (that~is, with a lower-bounded positive
probability the game progresses toward the terminal states). Then
Kakutani's fixed-point theorem~\cite{kakutani1941} can be applied to get
the existence result, as is done in many other contexts. Note that the
above-mentioned terminating property is a property that one either
proves through discounting, like in stay-in-a-set games~\cite{SS01} and
for $\epsilon$-Nash equilibria in reachability games~\cite{CJM04}, or
that one imposes, like in ``games that end almost surely''
in~\cite{AT12}.

\smallskip \noindent
\textbf{Related work.}
Our notion of equilibria is close to the notion of \emph{trembling-hand perfect
equilibria}, which has been proposed in the context of matrix games
in~\cite{selten75}; in trembling-hand equilibria, strategy profiles should
be robust to small perturbations when playing (or implementing) the
strategies while keeping the standard optimality criteria of Nash equilibria.
This concept obviously shares conceptual considerations of our notion of
equilibria against imprecise deviations; however the point-of-view is somehow
dual: the~imprecision is in the implementation of the equilibrium
in~\cite{selten75}, whereas it is in the existence of really-profitable
deviations in our work. While the notion of trembling-hand perfection refines
that of Nash equilibria (it allows for a selection in the set of Nash
equilibria), our notion relaxes that notion. In particular every
trembling-hand perfect equilibrium is a Nash equilibriun, and every Nash
equilibrium is an equilibrium against imprecise deviations (and the inclusions
are strict).

$\epsilon$-Nash equilibria~\cite{CJM04} relax the notion of Nash
equilibria as well, but in a different way: deviations are interpreted
in a standard way, but single deviations should not increase the
payoff by more than $\epsilon$. This is another way to introduce
imprecision in Nash equilibria, which also ensures the existence of
stationary equilibria in stochastic concurrent games with terminal
rewards.

\section{Definitions}
    
In this paper, we study multiplayer stochastic concurrent games. This
section presents a definition of those games, discusses mixed strategy
Nash equilibria, and defines the new concept of equilibria under
imprecise deviations.

\subsection{Concurrent game}

In the following, if $A$ is an at most denumerable set,
$\Dist(A)$ will denote the set of probability distributions over
$A$. If $\delta$ is such a distribution, $\Supp(\delta)$ denotes
the support of $\delta$, that is the subset $\{a\in A \mid
\delta(a) > 0\}$. Pointwise addition for distributions will be written~$+$,
and multiplication by a scalar is written~$\cdot$, so that for any two
distributions~$\delta$  and~$\delta'$ on the same set $A$, and for any $p\in[0,1]$,
$p\cdot \delta+(1-p)\cdot \delta'$ is still a distribution on $A$.

\begin{tdefinition}
  A \emph{
  stochastic concurrent arena}~$\calA$ is a
  $5$-tuple \( \tuple{
    \States,\Agt, \Act, \left(\Allow_i\right)_{i\in\Agt}, \Tab 
  }
  \)
  where
  \begin{itemize}
  \item $\States$ is a finite set of states,
    $\Agt$ is a finite set of agents (or players),
    $\Act$ is a finite set of actions;
  \item for each $i\in\Agt$, $\Allow_i\colon\States \longrightarrow
    2^\Act \backslash\{\emptyset\}$ is a function describing the set
    of actions available to player $i$ from a given state;
  \item $\Tab\colon \States\times \Act^\Agt \rightarrow
    \Dist(\States)$ is the transition function, which assigns to every
    combined action of the players a distribution on the next states.
  \end{itemize}
We say that the arena is \emph{deterministic} whenever the transition
function is deterministic (\textit{i.e.}, only makes use of Dirac
distributions). 
\end{tdefinition}
We~fix a stochastic concurrent
arena~$\calA = \tuple{\States, \Agt, \Act,
  \left(\Allow_i\right)_{i\in\Agt}, \Tab}$ for the rest of this section.
We say a state~$s\in \States$ is \emph{final} if
$\Supp(\Tab(s,A))=\{s\}$ for all $A\in\Act^\Agt$ (that~is, $s$~is a
sink state).  The~set of final states is denoted by~$\Final$.
A \emph{history} (resp. \emph{run}) $\rho$ in $\calA$ is a finite
non-empty (resp. infinite) sequence of states $s_0 s_1 s_2 \dots \in
\States^+$ (resp. $\in \States^\omega$) such that there are actions
$A_1,A_2, \dots \in \Act^\Agt$ with $s_i \in \Tab(s_{i-1},A_i)$ for
every $i \ge 1$. We~denote by~$\first(\rho)$ (resp. $\last(\rho)$,
when relevant) the first (resp. last) state of $\rho$.
We say that $\rho$ is \emph{terminating} whenever it visits a state
in~$\Final$. 

A \emph{reward function} is a function that associates with any
(infinite) run a real number.
This function is \emph{terminal-reward} whenever there exists a function
$\nu\colon \Final \to \mathbb{R}$ such that:
\begin{itemize}
\item any non-terminating run has reward~$0$;
\item if $\rho$ is a terminating run which visits $f \in \Final$, then
  its reward is~$\nu(f)$.
\end{itemize}
In this case, we write $\phi$ as~$\phi_\nu$.
\begin{tdefinition}
  A stochastic concurrent game~$\calG$ is a pair~$\tuple{\calA,
  \phi}$ where $\calA$ is a stochastic concurrent arena and
  $\phi$ associates with each player~$i \in \Agt$ a reward
  function~$\phi_i$. The game has terminal-reward payoffs
  whenever each $\phi_i$ ($i \in \Agt$) is terminal-reward.
\end{tdefinition}

\subsection{Strategies and outcomes}

During a play, players in~$\Agt$ choose their next (distribution over)
moves concurrently and independently of each other, based~on
the current history~$h$ of the play, and what they are allowed to~do
in the current state~$\last(h)$. This is given by strategies, that we
define now.

\begin{tdefinition}
  A \emph{mixed strategy} for player~$i \in \Agt$ is a
  mapping~$\sigma_i \colon \States^+ \to \Dist(\Act)$,
  with the requirement that for all $h\in\States^+$,
  $\Supp(\sigma_i(h)) \subseteq
  \Allow_i(\last(h))$.
\end{tdefinition}

\looseness=-1
Note that strategies, as defined above, can only
observe the sequence of visited states along the history, but they may not
depend on the exact distributions chosen by the players along the
history, nor 
on the actual sequence of actions played by the players.
Notice that this model is more general than the model where actions
are visible, which are sometimes considered in the literature---see
for instance~\cite{ummels08} and \cite[Section 6]{BBMU11} or
\cite{CD14} for discussions---and
the results presented here are valid when considering visible actions.

In this paper, we consider several subclasses of strategies:
\begin{itemize}
\item the set of \emph{mixed strategies} of player~$i$ in arena
  $\mathcal{A}$, denoted~$\mathbb{S}_i^{\mathcal{A}}$, is the set
  containing all the strategies of player~$i$ as defined above;
\item the set of \emph{pure strategies} of player~$i$,
  denoted~$S_i^{\mathcal{A}}$ contains those strategies in which all
  probability distributions are Dirac functions (that is, strategies
  are in some sense deterministic);
\item the set of \emph{stationary strategies} of player~$i$,
  written~$\mathbb{M}_i^{\mathcal{A}}$, in which the value of the
  strategy over history $h$ only depends on $\last(h)$;
\item the set of \emph{(pure) memoryless strategies}, denoted
  with~$M_i^{\mathcal{A}}$, which contains the strategies that are
  pure and stationary.
\end{itemize}

A~strategy profile is a tuple $\sigma = (\sigma_i)_{i \in \Agt}$, in
which $\sigma_i$ is a strategy for player $i$.
Following the definitions introduced above, we consider the full
class~$\mathbb{S}^{\mathcal{A}}$ of mixed strategy profiles, the
class~$S^{\mathcal{A}}$ of pure strategy profiles, the
class~$\mathbb{M}^{\mathcal{A}}$ of stationary strategy profiles, and
the class~$M^{\mathcal{A}}$ of memoryless (that is, pure stationary)
strategy profiles. If $\calA$ is clear in the context, we will
simplify the various notations and skip the superscript $\calA$ in the
notation.

Let $\sigma$ be a strategy profile. We denote by~$\pr^\sigma(-)$
the probability measure induced by~$\sigma$ on the infinite runs
in~$\States^\omega$ as follows: the probability of cylinder $h
\States^\omega$, with $h = s_1 \dots s_p$, is defined as $\pr^\sigma(h
\cdot \States^\omega) = \prod_{i=1}^p \sigma(h_{<i})(s_i)$, where
$h_{<i}$ is the prefix of length $i-1$ of $h$ (if $i=1$, $h_{<i}$ is
the empty word); it extends in a unique way to the $\sigma$-algebra
generated by the above cylinders.

If $h\in\States^+$ is a history such that $\pr^\sigma(h \cdot
\States^\omega)>0$, we define the conditional probability measure
$\pr^\sigma(- \mid h)$ in a natural way: $\pr^\sigma(h' \cdot
\States^\omega\mid h) = \frac{\pr^\sigma(h' \cdot \States^\omega)}{
  \pr^\sigma(h \cdot \States^\omega)}$ if $h$ is a prefix of $h'$ and
$\pr^\sigma(h' \cdot \States^\omega\mid h) = 0$ otherwise; this
extends in a natural way to the generated $\sigma$-algebra.  For any
finite history $h'\in\States^+$, we write $\pr^\sigma(h' \mid h)$ as a
shorthand for $\pr^\sigma(h' \cdot \States^\omega \mid h)$.

For every $i \in \Agt$, let $\phi_i$ be a terminal-reward reward
function for player $i$, and define $\phi = (\phi_i)_{i \in \Agt}$. We
denote by $\Esp^{\sigma}(\phi_i \mid h)$ the expected value of the
reward function $\phi_i$ induced by the probability mesure
$\pr^\sigma(- \mid h)$. By extension, we write $\Esp^{\sigma}(\phi
\mid h)$ for the tuple $(\Esp^{\sigma}(\phi_i \mid h))_{i \in \Agt}$

\subsection{Nash equilibria}
We now define the notion of \NE{}, as introduced by
Nash~\cite{nash50}.
\begin{tdefinition}
  A \emph{\NE{}} from state $s_0$ is a (mixed) strategy profile
  $\sigma \in \mathbb{S}$ such that:
  \[
  \forall i\in\Agt~\forall \sigma'_i \in \mathbb{S}_i \quad
  \Esp^{\sigma[i/ \sigma'_i]}(\phi_i\mid s_0) \leq
  \Esp^{\sigma}(\phi_i\mid s_0).
  \]
  where $\sigma[i/\sigma'_i]$ 
  is the strategy profile obtained from~$\sigma$ by replacing
  strategy~$\sigma_i$ for player~$i$ with~$\sigma'_i$.
  
\end{tdefinition}
In this definition, strategy $\sigma'_i$ corresponds to a
\emph{deviation} of player $i$ with respect to the profile $\sigma$;
we will often use this terminology thereafter.

\begin{figure}[tb]
\begin{minipage}{.7\linewidth}
  \begin{subfigure}{0.35\textwidth}
  \centering
  \begin{tikzpicture}
    \node (win) [final] {$\vphantom{\frac13}-1,1$};
    \node (lose) [final, right of=win, node distance=2.3cm] {$\vphantom{\frac13}1,-1$};
    \node[minimum width=7mm] (s) at ($(win)!.5!(lose) + (0,1.3cm)$) [rn] {};

    \path (s.90) node[above=0.3cm] (entry) {}
                 edge[-latex',draw] (s);

    \path (s) edge[-latex'] node[left,pos=.4] {$sh$, $wr$} (win)
              edge[-latex'] node[right,pos=.4] {$sr$} (lose)
              edge[-latex',loop right, looseness=7,out=-20,in=40] node[right]
              {$wh$} (s);

  \end{tikzpicture}
  \caption{Hide-or-run game}
  \label{exHoR}
  \end{subfigure}
  \begin{subfigure}{0.6\textwidth}
  \centering
  \begin{tikzpicture}
    \node (win) [final] {$1,\frac1{3}$};
    \node (lose) [final, left of=win, node distance=3cm] {$\frac1{3},1$};
    \node[minimum width=7mm] (s) at ($(lose) + (0,1.3cm)$) [rn] {$1$};
    \node[minimum width=7mm] (sp) at ($(win) + (0,1.3cm)$) [rn] {$2$};

    \path (s.90) node[above=0.3cm] (entry) {}
                 edge[-latex',draw] (s);

    \path (s) edge[-latex'] node[left,pos=.4] {$s$} (lose)
              edge[-latex', out=10, in=170] node[above,pos=.4] {$c$} (sp);
    \path (sp) edge[-latex'] node[left,pos=.4] {$s$} (win)
               edge[-latex', out=190, in=-10] node[below,pos=.4] {$c$} (s);

  \end{tikzpicture}
  \caption{The first player to quit the loop loses}
  \label{exLF}
  \end{subfigure}
  
\caption{Two examples of games with cycling behaviours}
\label{fig:nonash}
\end{minipage}\hfill
\begin{minipage}{.3\linewidth}
    \begin{tikzpicture}
      \node (win) [final] {$0,0$};
      \node (lose) [final, right of=win, node distance=3cm] {$1,0$};
      \node[minimum width=7mm] (s) at ($(win)!.5!(lose) + (0,1.3cm)$) [rn] {$s$};
      
      \path (s.90) node[above=0.3cm] (entry) {}
      edge[-latex',draw] (s);
      
      \path (s) edge[-latex'] node[left,pos=.4] {$aa$, $ab$, $ba$} (win)
      edge[-latex'] node[right,pos=.4] {$bb$} (lose);
      
    \end{tikzpicture}\par
    $\sigma_1(a \mid s) = 1; \quad \sigma_2(b \mid s)=\epsilon$
\caption{A simple game}\label{fig:nounprecise}
\end{minipage}
\end{figure}

\begin{example}
\label{examples}
Fig.~\ref{fig:nonash} displays two examples of games that we will
describe now.
The hide-or-run game (on the left) represents a game where one player
has one snowball and wants to shoot the other player; the second
player is hiding, and wants to run to the other side of the road. The
first player can either wait or shoot, while the second can hide or
run. Label ``$sr$'' on a transition represents the concurrent action
``$s$ (shoot) for the first player and $r$ (run) for the second
player''.  The~payoff is~$(0,0)$ if the players keep on
playing~''$wh$'' (loop on the initial state).  The first player wins
after ``$sr$'', and loses after ``$sh$'' and ``$wr$'' (represented by
rewards $(1,-1)$ or $(-1,1)$).  One can easily check that this game
has no Nash equilibrium: if the probability to jointly take $wh$
(resp. $sr$) is positive, then the second player can deviate and earn
more with action pair $wr$ (resp. $sh$); if the probability to jointly
take $sh$ (resp. $wr$) is positive, then the first player can deviate
and earn more with action pair $wh$ (resp. $sr$).

The second game is turn-based, and numbers labelling nodes correspond
to the players: in the left-most state, the first player can decide
whether to stop (action $s$) or to continue (action $c$) playing the
game; symmetrically for the second player in the right-most
state. Again, the payoff is~$(0,0)$ if the play does not reach a
terminal state. This game has pure Nash equilibria: for instance, the
memoryless strategy profile where player~$1$ plays~$c$ and player~$2$
plays~$s$ is an equilibrium, with payoff~$(1,1/3)$. Another solution concept
would allow a tradeoff between players who will commit a fixed probability
each to exit the game (for example $\epsilon>0$). In general, such tradeoff
is not a Nash equilibrium as the other player can change his mind (play $c$).
\end{example}

\looseness=-2
While one can compute pure (that is, deterministic) Nash equilibria in
deterministic terminal-reward games~\cite{BBMU15}, in the general
case, computing mixed Nash equilibria in terminal-reward games is
undecidable. Even for turn-based games, \cite{ummels08} proved that it
is impossible to decide wheter a turn-based game with at least $14$
players has a Nash equilibrium where one player wins almost surely
(called $0$-safe condition). This result was later improved by
\cite{WTMKD15} to $0$-safe equibria with finite memory and pure
strategies in turn-based games with at least $5$ players.  In the
concurrent setting, \cite{UW11a} showed the existence of a Nash
equilibrium is undecidable for $14$-player concurrent deterministic
games using similar techniques, and when strategies do not observe the
actions which are played (as in the current paper), the number of
players can even be reduced to $3$ (\cite{BMS14}).  The $0$-safe
condition (one player should win) can be omitted in the concurrent
setting, thanks to a gadget, composed of a $2$-player zero-sum
concurrent game having almost-optimal strategies but no optimal
strategy, hence no Nash equilibrium (this is the first example
mentioned previously, and depicted on Fig.~(\ref{exHoR})).  If only
non-negative terminal rewards are allowed, these undecidability
results still hold in the concurrent setting, but under the additional
$0$-safe condition (there is no known game with no Nash equilibrium in
this setting); indeed, the previous gadgets cannot be adapted as
non-negative terminal rewards imply that every game is non-zero sum,
then no player has an incentive to make the game cycling, ensuring
global payoff $0$, instead of reaching a terminal state.  We summarize
this discussion with the stronger undecidability result which applies
to the precise setting of this paper.

\begin{ttheorem}[\cite{BMS14}]
  The existence problem of a Nash equilibrium in concurrent
  deterministic games with three players and terminal-reward payoff
  functions is undecidable.
\end{ttheorem}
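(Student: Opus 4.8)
The plan is to reduce from the \emph{value-1 problem} for probabilistic automata, which is known to be undecidable: given a probabilistic automaton~$\mathcal{P}$ over a finite input alphabet, with a designated set of accepting states, decide whether $\sup_w \pr_{\mathcal P}(\text{acc}\mid w)=1$, where $w$ ranges over all finite words. The crucial feature is that this supremum need not be attained; I~would engineer a three-player deterministic concurrent game $\calG_{\mathcal P}$ so that it admits a \NE{} \emph{if and only if} the value of~$\mathcal P$ is strictly below~$1$, thereby transferring undecidability.

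First I would build the simulation. Two of the three players act as ``provers'' who, through their concurrent (randomized) actions, jointly select the successive input letters fed to~$\mathcal P$, while the transition function of the arena updates a state encoding the current behaviour of~$\mathcal P$ over its control states. The third player is a ``refuter'' who may, at each round, stop the simulation and send the play to a terminal state whose reward reflects whether the reached automaton state is accepting. Because strategies here observe only the sequence of visited states and \emph{not} the actions chosen (nor the exact mixed moves), the provers cannot detect or retaliate against a unilateral change of the refuter on the basis of the action actually played; it is precisely this action-invisibility that allows the reduction to go through with three players rather than the larger number needed when actions are visible.

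The heart of the argument is to make the non-attainment of the supremum correspond to the non-existence of equilibria, and this is where I would splice in a copy of the hide-or-run gadget of Fig.~\ref{exHoR}, a two-player zero-sum concurrent game that has $\epsilon$-optimal strategies for every $\epsilon>0$ but no optimal strategy. When $\mathcal P$ has value~$1$, the provers can push the acceptance probability arbitrarily close to~$1$ but never achieve an optimal profile, so from \emph{any} candidate strategy profile some player can strictly improve her expected reward by shifting probability mass toward a marginally better word (or by exploiting the gadget): no \NE{} exists. When the value is strictly below~$1$, the relevant optimum is effectively realised by a finite randomized commitment, the gadget does not trigger an endless chase, and one can exhibit an explicit equilibrium; the gadget is also what lets us dispense with any ``$0$-safe'' side condition (that some player win almost surely) and phrase the result as plain existence.

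I~expect the main obstacle to be the faithfulness of the simulation: forcing the two provers---who cannot see each other's moves---to spell out a single well-defined sequence of automaton letters, while still leaving the refuter enough power to punish any cheating, all inside a \emph{deterministic} arena and with terminal rewards only. Making the payoffs align with the value-1 threshold \emph{exactly} (so that equilibrium existence matches ``value~$<1$'' and not merely some approximate threshold) is the delicate point, and it is what simultaneously forces the use of concurrency and of action-invisible strategies.
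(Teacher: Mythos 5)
First, a point of calibration: the paper does not prove this theorem. It is imported verbatim from~\cite{BMS14}, and the only in-text justification is the surrounding discussion, which records two ingredients of that proof: (i)~action-invisibility of strategies is what brings the number of players down to three, and (ii)~the hide-or-run gadget of Fig.~\ref{exHoR} --- a two-player zero-sum concurrent game with almost-optimal but no optimal strategies, hence no \NE{} --- is what removes the ``$0$-safe'' side condition and yields plain (unconstrained) existence. Your sketch correctly identifies both of these ingredients and the general shape of the argument (a reduction from an undecidable quantitative problem on probabilistic automata, with non-attainment of a supremum translated into non-existence of equilibria), so at the level of architecture you are aligned with what the paper reports about~\cite{BMS14}.

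However, two gaps in your sketch are genuine and not merely matters of detail. The first concerns the simulation itself: the arena is required to be \emph{deterministic}, so its transition function cannot ``update a state encoding the current behaviour of $\mathcal P$'' --- the probabilistic transitions of the automaton have no source of randomness other than the players' mixed strategies. The central technical difficulty of the construction is to designate a player whose equilibrium behaviour is to randomize with \emph{exactly} the automaton's transition probabilities, and to make any miscalibration of that randomization detectable and punishable even though actions (and the mixed moves themselves) are invisible to the other players. Your sketch locates the difficulty elsewhere (coordination of the two ``provers'' on a common letter sequence) and leaves this mechanism unaddressed. The second gap is logical: your intended equivalence is that $\calG_{\mathcal P}$ has a \NE{} iff $\sup_w\pr_{\mathcal P}(\mathrm{acc}\mid w)<1$, and you argue the right-to-left direction by asserting that when the value is below~$1$ ``the relevant optimum is effectively realised by a finite randomized commitment''. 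That premise is false in general: a probabilistic automaton can have value strictly below~$1$ while the supremum over words is still not attained, so non-attainment --- which is exactly what you use to destroy equilibria in the value-$1$ case --- can occur on both sides of your threshold, and the claimed equivalence collapses. The reduction must tie equilibrium existence to a property that genuinely separates the two cases of the chosen undecidable problem, which your dichotomy as stated does not.
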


On the positive side, \cite{CJM04} showed that the relaxed notion of
$\varepsilon$-Nash equilibrium, where deviations may only improve the
payoffs by at most $\varepsilon$, always exists and can be
computed.
However, while the game of
  Fig.~(\ref{exLF}) is very symmetric, there is no ($\epsilon$-)Nash
  equilibrium (except the cycling one with payoff $0$ for both
  players) where the two players have close payoffs. This is due to
  the discontinuity yielded by the pure deviation which consists in
  cycling; and if this pure strategy is not played precisely, there
  will actually be no improvement in the payoffs.
  We will therefore propose a new notion of equilibria where
  improvements by deviations should not come from a (punctual)
  discontinuity in the payoff function.

\subsection{Equilibria under imprecise deviations}

In this paper, we propose a new solution concept, with some
\textit{robustness} constraints on possible deviations,
which will enjoy rather nice termination and continuity properties.
\begin{tdefinition}
  An \emph{equilibrium under $\epsilon$-imprecise deviations} from
  state $s_0$ is a strategy profile $\sigma \in \mathbb{S}$ s.t.
  \[\forall i\in\Agt.\
  \forall \sigma'_i \in \mathbb{S}_i.\ \exists \sigma''_i \in
  \mathbb{S}_i\ \text{s.t.}\ \Esp^{\sigma[i/ \sigma''_i]}(\phi_i\mid
  s_0) \leq \Esp^{\sigma}(\phi_i\mid s_0) \ \text{and}\
  d(\sigma'_i,\sigma''_i) \leq \epsilon\]
  where $d(\sigma,\sigma')$ is the supremum distance between the two
  distributions:
  \[
  d(\sigma,\sigma') = \sup_{h\in\States^+} d(\sigma(h), \sigma'(h))
  \]
\end{tdefinition}
The intuition behind that definition is that, to have an incentive to
deviate, a player should be sure to improve her payoff, even if her
deviation is perturbed by $\epsilon$ (this corresponds to some noise
the other players can add, or to a lack of precision in playing
distributions). Said differently, a deviation is only considered
profitable when all the surrounding (up~to a distance of~$\epsilon$)
strategies are also profitable.

We will prove that this new solution concept enjoys very nice
properties: (a) for every $\epsilon>0$, equilibria under
$\epsilon$-imprecise deviations always exist, and (b) we can decide
(and compute) such equilibria with constraints over the payoffs of the
players.

\begin{example}
  Back to the first game in Example~\ref{examples}
  (Fig.~\ref{exHoR}). The strategy profile such that the first player
  plays $s$ with proba $1$ and player $2$ plays $r$ with probability
  $\epsilon$ is an equilibrium under $\epsilon$-imprecise deviations
  with payoff $(2\epsilon-1,1-2\epsilon)$ (only the second player can
  deviate and improve, but its deviation will be smaller (w.r.t. the
  distance) than $\epsilon$).

  In the second game in Example~\ref{examples}
  (Fig.~\ref{exLF}). The strategy profiles where each player plays $s$
  with probability $\epsilon$ yields payoffs ${1-2/(6-3\epsilon)}$ for
  player~$1$ and ${1-(2-2\epsilon)/(6-3\epsilon)}$ for player~$2$ from
  the initial state.  It is an equilibrium under $\epsilon$-imprecise
  deviations. The only way to really improve the payoff for a player
  is to play with higher probability action $c$. But with the lack of
  precision, she might lose some payoff anyway.  The payoff values get
  arbitrarily close to $2/3$ as $\epsilon$ goes to $0$. Such an
  equilibrium is neither a Nash equilibrium, neither a $\epsilon$-Nash
  equilibrium, since the pure deviation $c$ allows an improvement of
  almost $1/3$.

  Finally, consider the game of Fig.~\ref{fig:nounprecise}, and the strategy
  profile~$(\sigma_1,\sigma_2)$: the payoff is then~$(0,0)$, and player~$1$
  can improve her payoff by~$\epsilon$ by playing action~$b$ from~$s$.
  So~$(\sigma_1,\sigma_2)$ is an $\epsilon$-Nash equilibrium but not an
  equilibrium under $\epsilon$-imprecise deviations: any strategy at distance
  $\epsilon$ from~$\sigma'_1$ strictly improves the payoff of player~$1$.
  Thus we conclude that the two concepts are incomparable.
\end{example}

\begin{remark}
  As we already noticed, equilibria under imprecise deviations are not
  Nash equilibria in the classical sense, but Nash equilibria are
  equilibria under imprecise deviations. So our notion relaxes that of
  Nash equilibria.
  Finally the concept of trembling-hand equilibria~\cite{selten75},
  already discussed in the introduction, is an orthogonal notion.
\end{remark}

\section{Existence of equilibria under imprecise deviations}

In this section, we prove the following existence result:

\begin{ttheorem}
  \label{theo:existence}
  Let $\mathcal{G}$ be a stochastic concurrent game with
  terminal-reward payoffs, and let $s_0$ be
  a state of $\mathcal{G}$. For every $\epsilon>0$,
  %\le\frac{1}{|\Act|}
  there always exists an equilibrium under
  $\epsilon$-imprecise deviations from state $s_0$.
\end{ttheorem}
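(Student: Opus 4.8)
The plan is to apply Kakutani's fixed-point theorem to a suitably constrained best-response correspondence, exactly as the authors announce in the introduction. I will first reformulate ``equilibrium under $\epsilon$-imprecise deviations'' so that the $\epsilon$-robustness becomes a \emph{constraint on the strategies one is allowed to deviate to}, rather than a quantifier alternation over nearby strategies. Concretely, I would restrict attention to strategy profiles living in a compact convex subset $\mathbb{S}^\epsilon \subseteq \mathbb{S}$ in which every player, at every history, plays each allowed action with probability at least some fixed $\eta = \eta(\epsilon) > 0$ (so that distributions stay $\eta$-bounded away from the boundary of the simplex). The point of this restriction is twofold: it makes the set of profiles compact and convex (a product of compact convex simplices-with-floor, carried by the topology of pointwise convergence, which is compact by Tychonoff), and it guarantees the \textbf{terminating property} advertised in the introduction---from every non-final state some combined action has positive probability of progressing toward $\Final$, and since every action is played with probability at least $\eta$, the game reaches $\Final$ with probability bounded below along every bounded number of steps, hence almost surely and with a geometric tail. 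This last point is what makes the expected terminal-reward payoff $\Esp^{\sigma}(\phi_i \mid s_0)$ well-defined and, crucially, \emph{continuous} in $\sigma$ on $\mathbb{S}^\epsilon$.

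With the arena thus constrained, I would define for a profile $\sigma \in \mathbb{S}^\epsilon$ the best-response correspondence
\[
  \BR(\sigma) = \prod_{i\in\Agt} \argmax_{\sigma'_i}\ \Esp^{\sigma[i/\sigma'_i]}(\phi_i \mid s_0),
\]
where each player $i$ maximizes over her own admissible strategies in the constrained set. To invoke Kakutani I must verify: (i) the domain is a nonempty compact convex subset of a locally convex topological vector space; (ii) $\BR(\sigma)$ is nonempty for each $\sigma$ (this follows from continuity of the payoff plus compactness of the admissible set, i.e. a maximizer exists); (iii) $\BR(\sigma)$ is convex for each $\sigma$; and (iv) $\BR$ has closed graph (upper hemicontinuity). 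Item (iii) is the \textbf{convexity of best responses} the authors flag: fixing the opponents' strategies, player $i$'s expected payoff should be \emph{affine} (or at least quasi-concave) in her own mixing probabilities along the terminating dynamics, so the argmax set is convex; I would prove affineness by expanding the payoff as a sum over histories weighted by reach probabilities and checking linearity in $\sigma_i(h)$ at each $h$. A fixed point $\sigma^\star \in \BR(\sigma^\star)$ then exists.

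The final and conceptually most delicate step is to translate the fixed point back into the original definition of $\epsilon$-imprecise equilibrium. A fixed point $\sigma^\star$ is optimal for each player \emph{within the constrained class} $\mathbb{S}^\epsilon$, but the definition quantifies over \emph{all} deviations $\sigma'_i \in \mathbb{S}_i$. The key observation is that for any arbitrary deviation $\sigma'_i$, there is an admissible $\sigma''_i \in \mathbb{S}^\epsilon$ within supremum distance $\epsilon$ of it---namely, push $\sigma'_i$ slightly toward the interior by mixing in a floor of mass on each allowed action, which changes each distribution by at most $\epsilon$ in the supremum metric provided $\eta$ is chosen appropriately relative to $\epsilon$ and $|\Act|$. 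Since $\sigma''_i$ is admissible and $\sigma^\star$ is a best response within $\mathbb{S}^\epsilon$, we get $\Esp^{\sigma^\star[i/\sigma''_i]}(\phi_i \mid s_0) \le \Esp^{\sigma^\star}(\phi_i \mid s_0)$, which is exactly the witness $\sigma''_i$ required by the definition. Hence $\sigma^\star$ is an equilibrium under $\epsilon$-imprecise deviations.

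I expect the main obstacle to be item (iv), the closed-graph/upper-hemicontinuity property of $\BR$ together with establishing the \emph{joint} continuity of the payoff map $\sigma \mapsto \Esp^{\sigma}(\phi_i \mid s_0)$ on $\mathbb{S}^\epsilon$ under the chosen topology. Continuity is not automatic for infinite-horizon objectives, and this is precisely where the terminating property earns its keep: the geometric bound on the time to reach $\Final$ lets me truncate each run at a finite depth $N$ with a uniform error $\delta(N) \to 0$, reducing the payoff to a finite sum of continuous functions of the (finitely many relevant) move distributions and passing to the limit uniformly. Getting the quantifiers in the uniform-truncation estimate right---so that the error bound is independent of $\sigma$ across the whole compact set---is the technical heart of the argument; the convexity and the rounding-to-interior step are comparatively routine once the right constrained set $\mathbb{S}^\epsilon$ has been fixed.
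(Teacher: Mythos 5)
Your overall architecture matches the paper's: turn the $\epsilon$-imprecision into a constraint on the profile space, derive a uniform termination bound, apply Kakutani to a constrained best-response map, and recover the original definition by rounding an arbitrary deviation into the constrained set (this last step is essentially how the paper exploits its set $\Delta_\epsilon$, and your version of it is clean). The differences are real but mostly benign: the paper constrains only the \emph{exiting actions} of strong components and works with \emph{stationary} profiles, which keeps the domain Euclidean so that Kakutani's theorem as cited applies, at the price of a separate lemma (via the two-player deviation game $\dg{\calG}{\epsilon}$) showing that memoryless deviations suffice; you instead put a floor $\eta=\epsilon/|\Act|$ on \emph{every} action at \emph{every} history, which places you in an infinite-dimensional compact convex set and hence requires the Fan--Glicksberg generalization of Kakutani, not the Euclidean statement the paper invokes. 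A smaller issue: your termination claim (``from every non-final state some combined action has positive probability of progressing toward $\Final$'') is simply false in the presence of cycling states; the paper first reduces to cycle-free games (Proposition~\ref{prop:cyclefree}). Your argument can be repaired without that reduction (runs absorbed into non-final bottom components of the support graph contribute payoff $0$, and it is \emph{absorption}, not reaching $\Final$, that has the geometric tail needed for uniform truncation), but as written the justification of continuity is incorrect.

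The genuine gap is item~(iii), convexity of $\BR(\sigma)$. You propose to prove that player $i$'s payoff is ``affine in her own mixing probabilities'' by checking linearity in $\sigma_i(h)$ at each $h$. That establishes multilinearity across histories, which does \emph{not} yield a convex argmax. Concretely, take a one-player game where from $s_0$ action $a$ leads to $s_1$ and action $b$ to $s_2$, from $s_1$ action $a$ gives terminal reward $1$ and $b$ gives $0$, and from $s_2$ the rewards are swapped; writing $x,y,z$ for the probabilities of $a$ at $s_0,s_1,s_2$, the payoff is $xy+(1-x)(1-z)$, which is maximized (value $1$) at both $(1,1,1)$ and $(0,0,0)$ but takes value $1/2$ at their midpoint --- the argmax is not convex, and the function is not quasi-concave either. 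Convexity \emph{does} hold on your floor-constrained set, but only because the floor forces every consistent history to be reached with positive probability, so that optimality from $s_0$ decomposes into per-history optimality and the argmax becomes a product of faces of the constrained simplices; making this precise over the infinite horizon requires a Bellman-type uniqueness argument resting on the termination bound, which is exactly the part of the paper's proof that occupies most of its fixed-point lemma. So the step you call ``comparatively routine'' is the crux, and the justification you offer for it would fail.
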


The proof will rely on an alternative notion of equilibria, where
players are enforced to leave cycles of the game. We formalize this
now, and we fix for the rest of this section a stochastic concurrent
game with terminal-reward payoffs $\mathcal{G} = \tuple{\mathcal{A},
  \phi_\nu}$, with $\mathcal{A} =  
\tuple{\States, \Agt, \Act, \left(\Allow_i\right)_{i\in\Agt}, \Tab}$

\subsection{Non-cycling games}
  \begin{tdefinition}
    A~state~$s$ of~$\cal A$ is said \emph{cycling} if there exists a
    mixed strategy profile~$\sigma \in \mathbb{S}$ such that no player
    can enforce (by deviating) reaching a final state, that is:
    \[
    \forall i\in\Agt~\forall \sigma'_i \in \mathbb{S}_i,~
    \pr^{\sigma[i/\sigma'_i]}(\States^*\Final^\omega \mid s) = 0.
    \]
    
    The arena $\mathcal{A}$ (and by extension, the game $\mathcal{G}$)
    is said \emph{cycle-free} if it contains no cycling state.
  \end{tdefinition}
  We notice first that in the above definition, strategy profiles can
  be restricted to memoryless profiles ($\sigma \in M$), and
  deviations can be restricted to stationary deviations ($\sigma'_i
  \in \mathbb{M}_i$). Furthermore only the supports of these
  deviations matter.  

  We further notice that from any cycling state, there is a Nash
  equilibrium with payoff~zero for all the players (playing profile
  $\sigma$ from the definition). Those are also equilibria under
  imprecise deviations (since no payoff can be improved).
  
  They are therefore somehow pathological behaviours, that we will
  remove. This is formalized as follows:

  \begin{restatable}{tproposition}{propcycfree}

    \label{prop:cyclefree}
    One can construct a cycle-free game $\widetilde{\mathcal{G}} =
    \tuple{ \widetilde{\mathcal{A}},\phi_{\widetilde{\nu}}}$
    which has less Nash equilibria and less
    equilibria under imprecise deviations (whatever the bound on the
    imprecision): for every equilibrium (Nash, resp. under imprecise
    deviations) $\widetilde\sigma$ in
    $\widetilde{\calG}$, one can build an equilibrium (Nash, resp.
    under imprecise deviations) $\sigma$ with the same payoffs in $\calG$.
  \end{restatable}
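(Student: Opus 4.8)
The plan is to construct $\widetilde{\calG}$ by identifying all cycling states and redirecting their behaviour toward the terminal rewards they can guarantee. Concretely, I~would first characterize cycling states more usefully: by the remark following the definition, a~state~$s$ is cycling iff there is a \emph{memoryless} profile under which, even after any stationary support-deviation by a single player, the probability of reaching~$\Final$ is~$0$. This gives a purely combinatorial (support-based) criterion, so the set~$C$ of cycling states can be computed, and—crucially—membership in~$C$ depends only on the arena's graph structure (allowed actions and supports of~$\Tab$), not on the rewards. The transformation~$\widetilde{\calA}$ then replaces each cycling state~$s\in C$ by a fresh \emph{final} state whose reward is~$\widetilde\nu(s) = 0$ for every player (matching the payoff-zero Nash equilibrium that the definition guarantees from~$s$), while leaving all non-cycling states and the transitions among them untouched. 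I~would need to check that this redirection does not create \emph{new} cycling states among the surviving non-cycling states; this follows because any witnessing profile and deviation in~$\widetilde{\calA}$ can be lifted back to~$\calA$ (a~run that cycles forever in~$\widetilde{\calA}$ among non-cycling states would witness that its first state is cycling in~$\calA$, a~contradiction), so~$\widetilde{\calG}$ is cycle-free.

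Next I~would establish the payoff-preserving correspondence in both directions and for both solution concepts simultaneously. Given an equilibrium~$\widetilde\sigma$ in~$\widetilde{\calG}$, I~build~$\sigma$ in~$\calG$ by playing~$\widetilde\sigma$ verbatim as long as the history stays within non-cycling states, and, upon first entering a cycling state~$s\in C$, switching to the payoff-zero profile~$\sigma^s$ witnessing that~$s$ is cycling (from the definition). Because~$\widetilde\sigma$ sees each redirected state as an absorbing final state with reward~$0$, and because~$\sigma^s$ yields expected payoff~$0$ to every player from~$s$ with no player able to improve by deviating, the expected payoffs from~$s_0$ under~$\sigma$ coincide exactly with those under~$\widetilde\sigma$. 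The key quantitative check is that the one-step transition probabilities and the reward contributions agree: the measure $\pr^{\widetilde\sigma}(-\mid s_0)$ on runs that reach a redirected sink corresponds to the measure $\pr^{\sigma}(-\mid s_0)$ on runs that enter the matching cycling state and thereafter contribute reward~$0$, so $\Esp^{\sigma}(\phi_i\mid s_0)=\Esp^{\widetilde\sigma}(\phi_{\widetilde\nu,i}\mid s_0)$ for all~$i$.

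The heart of the argument is showing that $\sigma$ inherits the equilibrium property. Suppose player~$i$ has a (really-profitable, or~Nash-profitable) deviation~$\sigma'_i$ against~$\sigma$ in~$\calG$. I~distinguish deviations by where they first diverge from~$\sigma$. If the divergence happens only after entering some cycling state~$s$, then by the defining property of~$\sigma^s$ (no player can enforce reaching~$\Final$, hence no player can obtain positive expected terminal reward) the deviation cannot improve player~$i$'s payoff, contradicting profitability. If~instead the deviation diverges while still in non-cycling states, I~project it to a deviation~$\widetilde\sigma'_i$ in~$\widetilde{\calG}$; the projected deviation achieves the same payoff improvement against~$\widetilde\sigma$, contradicting that~$\widetilde\sigma$ is an equilibrium. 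For the imprecise case one carries the $\epsilon$-neighbourhood through this projection: the distance~$d$ is a supremum over histories, and since the redirected sinks are absorbing, any~$\sigma''_i$ within~$\epsilon$ of~$\widetilde\sigma'_i$ in~$\widetilde{\calG}$ lifts to a strategy within~$\epsilon$ of the lifted~$\sigma'_i$ in~$\calG$, preserving the required witness.

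I~expect the main obstacle to be the bookkeeping around deviations that \emph{reintroduce} cycling behaviour in~$\calG$: a~player might deviate inside a state that is non-cycling in~$\calA$ but, combined with the frozen profile~$\sigma^s$ used after cycling states, create a run that loops forever and collects reward~$0$. The delicate point is that such a deviation is never profitable (its payoff is at most~$0$ where the equilibrium already guarantees at least the corresponding value), but making this rigorous requires arguing that the really-profitable/$\epsilon$-imprecise quantifier structure is respected under the lift—in~particular that a profitable deviation in~$\calG$ whose improvement comes solely from forcing a cycle cannot survive the robustness requirement, precisely the pathology the paper sets out to excise. Handling the interaction between the support-based characterization of cycling states and the measure-theoretic payoff computation is where the care is needed; the rest is a routine transfer along the natural bijection between runs.
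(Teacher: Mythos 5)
Your construction---collapsing the cycling states into zero-reward final sinks and lifting an equilibrium back by switching to the witnessing payoff-zero profile~$\sigma^s$ upon first entry into a cycling state---is exactly the paper's approach, and the key ingredients (cycle-freeness of~$\widetilde{\calA}$, the payoff correspondence, the projection of deviations and of the $\epsilon$-witnesses, which must be set equal to~$\sigma'_i$ on histories that have already entered~$C$ so that the distance bound is preserved) are all present and correct. The one point to tighten is the case split on where a deviation ``first diverges'': a single deviation~$\sigma'_i$ generally does both, so the clean argument is to decompose the expected payoff of~$\sigma[i/\sigma'_i]$ according to whether the run enters~$C$ (contribution~$0$, since the continuation is $\sigma^s[i/\tau_i]$ for the shifted deviation~$\tau_i$ and the cycling property forbids reaching~$\Final$) or stays in non-cycling states (where it coincides run-by-run with the projected profile in~$\widetilde{\calG}$); this decomposition also dissolves the ``reintroduced cycling'' worry of your last paragraph, since such runs contribute~$0$ in both games.
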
 

  This proposition allows to prove Theorem~\ref{theo:existence} by
  restricting to cycle-free games: if the existence holds for
  cycle-free games, then it will hold as well for the whole class of
  stochastic concurrent games with terminal-reward payoffs.

\subsection{Strong components and terminating strategy profiles}

We will see that equilibria under imprecise deviations with stationary
strategies always exist. The main argument of the existence theorem relies
on the structure of the strategy profiles, that can be forced to terminate the
game, even in the presence of deviations. We describe in this subsection
a definition of the constraints we impose on our strategies. These constraints
should be tight enough for the game to terminate, later implying the existence
theorem of a \emph{stable} profile, but should also be general enough for this
same \emph{stable} profile to capture the notion of equilibria under imprecise
deviations.

  \begin{tdefinition}
    \looseness=-1 Let $C$ be a non-empty set of states of $\calA$, and
    $\sigma \in \mathbb{M}$ be a stationary strategy profile. We~say
    that $\sigma$ stabilizes $C$ if for every $s \in C$, for every $s'
    \in \States$, $\pr^\sigma(\States^*\cdot s' \mid s) > 0\quad
    \text{iff}\quad s' \in C$.  When such a profile exists for $C$, we
    say that $C$ is a \emph{strong component}, and write $\SC$ the set
    of strong components.
  \end{tdefinition}
  Notice that for defining the stabilization property, one could
  equivalently require the probability be equal to~$1$. Also notice
  that every strong component intersecting~$\Final$ is reduced to a
  singleton.

  \begin{tdefinition}
    Let $C\in \SC$ be a strong component, and $s\in C$.
    An~action~$a\in\Act$ is an \emph{exiting action} from~$C$ for a
    state~$s$ and player~$i$ if there exists $\sigma \in \mathbb{S}$
    which stabilizes $C$ such that:
    \[
    \pr^{\sigma[i/(s\mapsto a)]} \left(s\cdot\left(\States\setminus
        C\right) \cdot \States^\omega \mid s\right) > 0.
    \]
    We set $\exit(C) = \{(a,i,s) \mid a\ \text{is an exiting action
      from}\ C\ \text{for a state}\ s\ \text{and player}\ i\}$.
  \end{tdefinition}

  We then trivially have:
  \begin{tlemma}
    If $\calA$ is cycle-free, then for any $C\in \SC$, $\exit(C) \neq
    \emptyset$.
  \end{tlemma}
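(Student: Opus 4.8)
The plan is to prove the contrapositive: I assume $\exit(C)=\emptyset$ for some strong component $C$ and exhibit a cycling state inside $C$, which contradicts the cycle-freeness of~$\calA$. Throughout I restrict to components with $C\cap\Final=\emptyset$; a strong component meeting~$\Final$ is a final singleton (as already noted), which is a terminating state and hence never cycling, so it lies outside the relevant case and needs no exiting action.

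Since $C\in\SC$, I first fix a stationary profile $\sigma\in\mathbb{M}$ that stabilizes~$C$; by definition, from any $s\in C$ the play under~$\sigma$ stays in~$C$ and visits exactly the states of~$C$. The hypothesis $\exit(C)=\emptyset$ states that no triple $(a,i,s)$ is exiting, i.e.\ for \emph{every} stabilizing profile, and in particular for our fixed~$\sigma$, the one-action deviation $s\mapsto a$ of player~$i$ cannot leave~$C$ in one step: $\pr^{\sigma[i/(s\mapsto a)]}\!\left(s\cdot(\States\setminus C)\cdot\States^\omega\mid s\right)=0$. Expanding this probability as $\sum_{A_{-i}}\bigl(\prod_{j\neq i}\sigma_j(s)(A_j)\bigr)\cdot\Tab(s,(a,A_{-i}))(\States\setminus C)$ and using that it vanishes, I obtain the one-step invariant: for every $s\in C$, every player~$i$, every action $a\in\Allow_i(s)$, and every joint action $A_{-i}$ of the other players in the support of~$\sigma$ at~$s$, we have $\Supp(\Tab(s,(a,A_{-i})))\subseteq C$. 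In words, as long as the players other than~$i$ follow~$\sigma$, no action of player~$i$ pushes the play out of~$C$ in a single step.

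The final step lifts this one-step invariant to the whole run: by induction on the play length, for every player~$i$ and every (arbitrary, history-dependent, mixed) deviation $\sigma'_i\in\mathbb{S}_i$, the profile $\sigma[i/\sigma'_i]$ started from any $s\in C$ keeps the play inside~$C$ forever, since at each step the realized action of~$i$ is an allowed action and the others stay within their $\sigma$-supports, so the invariant applies and the successor state is again in~$C$. As $C\cap\Final=\emptyset$, the play never reaches~$\Final$, i.e.\ $\pr^{\sigma[i/\sigma'_i]}(\States^*\Final^\omega\mid s)=0$ for all~$i$ and all~$\sigma'_i$; this is exactly the definition of~$s$ being cycling, the desired contradiction. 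I expect the only delicate point to be precisely this bridge between the two definitions — the notion of exiting action quantifies only over single stationary one-action deviations and one-step escapes, whereas cycling allows arbitrary mixed deviations and forbids reaching~$\Final$ anywhere along the infinite run. The bridge works because the absence of a one-step escape under the fixed stabilizing~$\sigma$ is an invariant preserved at every step independently of the deviator's full strategy, as only the per-step realized (allowed) action matters; everything else reduces to routine support and probability bookkeeping.
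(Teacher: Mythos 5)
Your proof is correct and is exactly the routine argument the paper leaves implicit (the lemma is introduced with ``We then trivially have'' and no proof is given): assuming $\exit(C)=\emptyset$, the one-step invariant extracted from a fixed stabilizing profile is preserved under arbitrary history-dependent deviations, so every state of $C$ is cycling, contradicting cycle-freeness. Your restriction to components disjoint from $\Final$ is also the right reading of the statement: a final singleton is a strong component with $\exit(C)=\emptyset$ (a sink admits no exiting action), so the lemma as literally phrased requires that exclusion, and you correctly observe that such states are never cycling and play no role in how the lemma is used.
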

  For the rest of this subsection, we will systematically assume that
  $\calA$ is cycle-free.

  We will now restrict the set of strategy profiles in which we search
  for equilibria. Under this restriction, each play will eventually
  reach a final state with probability~$1$.
  Nash equilibria restricted to this set of strategies will actually
  correspond to our modified notion of equilibria, in a sense that we
  will make precise.

  \begin{tdefinition}
    Let~$\epsilon>0$ and assume 
    $\calA$ is cycle-free.  For every strong component~$C\in \SC$, we
    define the set of \emph{$(\epsilon,C)$-exiting stationary strategy
      profiles} as follows:
    \[
    \Delta_{\epsilon}(C) = \left\{ \sigma\in\mathbb{M} \mid
      \forall (a,i,s)\in\exit(C)
      \sigma_i(s)(a) \geq \epsilon
    \right\}
    \]
    We also let $\Delta_\epsilon = \bigcap_{C\in \SC}\Delta_\epsilon(C)$.
  \end{tdefinition}
  Note that, to be properly defined and non-empty, $\Delta_\epsilon$ requires the
  assumption that the game arena is cycle-free.

  \begin{tlemma}
    For all $\epsilon\leq \frac1{|\Act|}$ and $\calA$ cycle-free,
    it~holds $\Delta_{\epsilon}
    \neq \emptyset$.
  \end{tlemma}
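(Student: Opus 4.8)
The plan is to produce one explicit profile that witnesses non-emptiness, rather than to argue abstractly: the \emph{uniform} profile will satisfy every constraint defining $\Delta_\epsilon$ at once. I~would define $\sigma^{\mathrm u}\in\mathbb{M}$ by setting, for each player $i$ and each state $s$, the distribution $\sigma^{\mathrm u}_i(s)$ to be uniform over $\Allow_i(s)$. This is a legal stationary profile because $\Allow_i(s)$ is non-empty for every $i$ and $s$, and the support of $\sigma^{\mathrm u}_i(s)$ is exactly $\Allow_i(s)$.

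The one thing to verify --- and the only place any argument is needed --- is that \emph{every exiting action is an allowed action}. I~would argue this directly from the definition of $\exit(C)$: if $(a,i,s)\in\exit(C)$, then the deviation $s\mapsto a$ must itself be a legal strategy for player $i$, and legality requires $a\in\Allow_i(\last(h))=\Allow_i(s)$ for any history $h$ ending in~$s$. Hence $a\in\Allow_i(s)$, and therefore
\[
\sigma^{\mathrm u}_i(s)(a)=\frac1{|\Allow_i(s)|}\ge\frac1{|\Act|}\ge\epsilon,
\]
where the first inequality uses $\Allow_i(s)\subseteq\Act$ and the second uses the standing hypothesis $\epsilon\le 1/|\Act|$.

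Since this lower bound holds for every triple $(a,i,s)\in\exit(C)$ and every strong component $C\in\SC$, the profile $\sigma^{\mathrm u}$ belongs to $\Delta_\epsilon(C)$ for each $C$, and hence to the intersection $\Delta_\epsilon=\bigcap_{C\in\SC}\Delta_\epsilon(C)$; this establishes $\Delta_\epsilon\neq\emptyset$. I~do not expect any real obstacle: the whole content sits in the threshold $\epsilon\le 1/|\Act|$, which is precisely what lets a single distribution place mass at least $\epsilon$ on every allowed (hence every exiting) action simultaneously, without ever exceeding total mass~$1$. The cycle-free hypothesis plays no part beyond ensuring, through the preceding lemma, that the sets $\exit(C)$ are those under consideration and that $\Delta_\epsilon$ is properly defined.
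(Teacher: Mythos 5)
Your proposal is correct and takes essentially the same approach as the paper: exhibit the uniform stationary profile and observe that it assigns each allowed (hence each exiting) action probability at least $1/|\Act|\ge\epsilon$. You simply spell out the details that the paper's one-line proof leaves implicit.
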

  \begin{proof}
    Consider the stationary strategy profile~$\sigma_u$ which makes
    each player play uniformly at random over the set of allowed
    actions, at each state.

    For~any~$C\in \SC$, since $\exit(C)$ is non-empty, this strategy
    profile is in~$\Delta_\epsilon(C)$. Hence $\sigma_u\in
    \Delta_\epsilon$.
  \end{proof}

The strategy profiles in~$\Delta_\epsilon$ enjoy the following
property, which establishes some kind of fairness with respect to
final states for strategies in $\Delta_\epsilon$. This will be useful
in the sequel:
  \begin{tproposition}
    \label{thmTerm}
    Fix $0<\epsilon\leq \frac1{|\Act|}$ and $\calA$ cycle-free.
    There exist $0<p<1$ and
    $k\in\mathbb{N}$ such that for every $\sigma\in\Delta_\epsilon$,
    for every $s \in \States$, for every $n \ge 0$,
    $\pr^\sigma(\States^{k \cdot n}\cdot\Final^\omega \mid s)\geq
    1-p^n$.
  \end{tproposition}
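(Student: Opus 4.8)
The plan is to isolate a uniform ``one-window'' termination bound and then iterate it. Concretely, I will prove that there are $k\in\mathbb{N}$ and $q>0$ such that
\[
\forall\sigma\in\Delta_\epsilon\ \ \forall s\in\States\qquad \pr^\sigma(\States^{k}\cdot\Final^\omega \mid s)\ \geq\ q .
\]
Since states of $\Final$ are sinks, the event $\States^{k}\cdot\Final^\omega$ is exactly ``$\Final$ is visited within the first $k$ steps''. Granting the bound, put $p=1-q\in(0,1)$ and let $u_n(s)$ be the probability, under $\sigma$ and from $s$, that $\Final$ has \emph{not} been visited by time $kn$, so that $\pr^\sigma(\States^{kn}\cdot\Final^\omega\mid s)=1-u_n(s)$. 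Peeling off the last block of $k$ steps and conditioning on the state $X_{kn}$ reached at time $kn$ (necessarily non-final on the event considered), the fact that $\sigma$ is stationary makes the process time-homogeneous, and the one-window bound gives that the conditional probability of surviving the next $k$ steps is at most $p$; hence $u_{n+1}(s)\leq p\cdot u_n(s)$. As $u_0\leq 1$, we get $u_n(s)\leq p^n$, which is the statement.

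By the compactness argument below, the one-window bound reduces to the qualitative fact that, for every $\sigma\in\Delta_\epsilon$, the Markov chain induced by $\sigma$ reaches $\Final$ with positive probability from every state, equivalently that every bottom strongly connected component (BSCC) of that chain is a final singleton. Let $C$ be such a BSCC and suppose, for contradiction, that $C\cap\Final=\emptyset$. Being a BSCC, $C$ is closed and, from each of its states, all of $C$ is reachable; thus $\sigma$ stabilizes $C$ and $C\in\SC$. Two cases arise. Either (i)~some player $i$ can leave $C$ in one step from some $s^*\in C$, i.e.\ there are $a\in\Allow_i(s^*)$ and $A_{-i}\in\Supp(\sigma_{-i}(s^*))$ with $\Supp(\Tab(s^*,(a,A_{-i})))\not\subseteq C$; or (ii)~no player can. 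In case~(i), $\sigma$ itself witnesses $(a,i,s^*)\in\exit(C)$, so $\sigma\in\Delta_\epsilon$ forces $\sigma_i(s^*)(a)\geq\epsilon>0$; but then, under $\sigma$ with no deviation, the combined action $(a,A_{-i})$ has positive probability at $s^*$ and exits $C$, contradicting that $C$ is closed. In case~(ii), for every player $i$, every $s\in C$, every $a\in\Allow_i(s)$ and every $A_{-i}\in\Supp(\sigma_{-i}(s))$ the support of $\Tab(s,(a,A_{-i}))$ lies in $C$; hence from any $s\in C$ every deviation $\sigma[i/\sigma'_i]$ keeps the play inside $C$ forever and never visits $\Final$, so $s$ is cycling, contradicting cycle-freeness. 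Therefore $C$ is final, and being a sink it is a singleton. This dichotomy is the conceptual heart of the argument and the step I expect to be the main obstacle: one must see that the escape from a non-final BSCC is always an exiting action \emph{witnessed by $\sigma$ itself}, so that the $\epsilon$-lower bound built into $\Delta_\epsilon$ applies and breaks closure, whereas the absence of any such escape immediately exhibits a cycling state.

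It remains to upgrade positivity to the uniform one-window bound. Take $k=|\States|$: whenever $\Final$ is reachable from $s$ in the chain induced by $\sigma$, it is reachable along a simple path and hence within $k$ steps, so the previous paragraph yields $\pr^\sigma(\States^{k}\cdot\Final^\omega\mid s)>0$ for all $\sigma\in\Delta_\epsilon$ and all $s$. Now $\Delta_\epsilon$ is compact: it is the finite intersection, over $C\in\SC$ and $(a,i,s)\in\exit(C)$, of the closed constraints $\{\sigma\mid\sigma_i(s)(a)\geq\epsilon\}$ inside the product of simplices $\mathbb{M}$, and it is non-empty by the preceding lemma. For each fixed $s$, the map $\sigma\mapsto\pr^\sigma(\States^{k}\cdot\Final^\omega\mid s)$ is a finite sum, over the length-$\leq k$ histories ending in $\Final$, of products of entries $\sigma_i(s')(a)$ and of fixed transition probabilities; it is therefore a polynomial in $\sigma$, and in particular continuous, and taking the minimum over the finitely many states preserves continuity. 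A continuous, everywhere-positive function on a non-empty compact set attains a positive minimum $q>0$; with $p=1-q$ this proves the one-window bound and hence the proposition. Once the dichotomy of the previous paragraph is secured, this compactness step is routine.
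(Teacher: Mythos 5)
Your proof is correct and follows what is essentially the intended route: the heart of the matter is exactly the dichotomy you identify on a non-final bottom SCC $C$ of the chain induced by $\sigma$ --- either a one-step escape is witnessed by $\sigma$ itself, so $(a,i,s^*)\in\exit(C)$ and the constraint $\sigma_i(s^*)(a)\geq\epsilon$ built into $\Delta_\epsilon$ contradicts closedness of $C$, or no unilateral deviation can ever leave $C$ and every state of $C$ is cycling --- and the geometric iteration of the one-window bound via the Markov property of a stationary profile is standard. The only point worth flagging is stylistic: you obtain the uniform one-window constant $q>0$ by compactness of $\Delta_\epsilon$ and continuity of $\sigma\mapsto\min_{s}\pr^\sigma(\States^{|\States|}\cdot\Final^\omega\mid s)$ rather than by exhibiting an explicit path whose transitions are uniformly bounded below; this is a clean and valid way to get uniformity over $\Delta_\epsilon$ (indeed the exit constraints only bound the deviating player's action, not the whole combined action), at the price of not yielding explicit values for $p$.
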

\subsection{Restricting to memoryless deviations}

This part is devoted to the proof of the following key lemma:
\begin{tlemma}
  Let $s_0$ be a state of a stochastic concurrent game~$\calG$ with
  terminal-reward payoffs.  For any stationary strategy
  profile~$\sigma\in\mathbb{M}$, it~holds: $\sigma$~is an equilibrium
  under $\epsilon$-imprecise deviations iff
      \[\forall i\in\Agt.\
      \forall \sigma'_i \in M_i.\ \exists \sigma''_i \in \mathbb{M}_i.
      \quad d(\sigma'_i,\sigma''_i) \leq \epsilon \wedge
      \Esp^{\sigma[i/ \sigma'_i]}(\phi_i\mid s_0) \leq
      \Esp^{\sigma}(\phi_i\mid s_0) \]
   In other terms, it~is sufficient to consider memoryless
   deviations when checking if a stationary strategy profile is
   an equilibrium under imprecise deviations.
\end{tlemma}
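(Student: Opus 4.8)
The plan is to fix a player $i$ and freeze the stationary strategies $\sigma_{-i}$ of the other players, so that player~$i$ alone faces a Markov decision process on $\States$ with the terminal-reward objective $\phi_i$. Writing $W_i = \Esp^\sigma(\phi_i\mid s_0)$ for player~$i$'s payoff under $\sigma$ and $P(\tau) = \Esp^{\sigma[i/\tau]}(\phi_i\mid s_0)$ for the payoff of a deviation $\tau$, I would introduce the \emph{robust best-response value}
\[
 V^*(s_0) = \sup_{\tau\in\mathbb{S}_i}\ \inf_{\tau':\, d(\tau,\tau')\le\epsilon}\ P(\tau').
\]
A deviation $\tau$ is \emph{really}-profitable exactly when every $\tau'$ within distance $\epsilon$ is profitable, i.e.\ when the inner infimum exceeds $W_i$. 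I~therefore claim (and the whole lemma reduces to this) that, for each fixed $i$, both the standard condition ``no really-profitable deviation'' and the displayed memoryless condition are equivalent to the single inequality $V^*(s_0)\le W_i$; the content is that restricting the outer deviation to pure memoryless \emph{centers} ($\sigma'_i\in M_i$) and the witness to stationary \emph{perturbations} ($\sigma''_i\in\mathbb{M}_i$) does not change this value.

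The key device is to read $V^*(s_0)$ as the value of a two-player zero-sum \emph{turn-based} stochastic game $\mathcal H$ with terminal-reward payoff $\phi_i$: at each non-final state~$s$, the maximiser (player~$i$) commits to a center $\mu\in\Dist(\Allow_i(s))$, then, seeing $\mu$, the minimiser (the perturbation) picks $\nu$ with $d(\mu,\nu)\le\epsilon$, and the play proceeds through $\Tab$ using $\nu$ for player~$i$ together with $\sigma_{-i}(s)$; I write $B_\epsilon(\mu)$ for the set of distributions at distance at most $\epsilon$ from $\mu$. Since the minimiser observes the center, $\mathcal H$ has perfect information, so classical results on stochastic games with terminal/reachability payoffs give a value $V^*$ satisfying the optimality equation $V^*(s)=\max_{\mu}\min_{\nu\in B_\epsilon(\mu)}\langle\nu,Q(s,\cdot)\rangle$ with $Q(s,a)=\sum_{A_{-i}}\sigma_{-i}(s)(A_{-i})\sum_{s'}\Tab(s,(a,A_{-i}))(s')\,V^*(s')$, together with \emph{memoryless} optimal strategies for both players that guarantee the value against arbitrary, even history-dependent, opponents. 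I~would then prove that the maximiser's optimal center can be taken \emph{pure}: for fixed continuation values $Q(s,\cdot)$, the map $\mu\mapsto\min_{\nu\in B_\epsilon(\mu)}\langle\nu,Q(s,\cdot)\rangle$ is maximised by a Dirac on some $a\in\argmax_a Q(s,a)$, because concentrating all mass on a best action both maximises the undisturbed value and leaves the perturbation the same budget $\epsilon$ to erode. Hence $V^*(s_0)=\max_{\pi\in M_i}\ \min_{\tau'\in\mathbb{M}_i,\ d(\pi,\tau')\le\epsilon}P(\tau')$, with the outer maximum over pure memoryless centers and the inner minimum over stationary perturbations, both attained; the inner attainment is the MDP fact that, against fixed $\sigma_{-i}$ under the per-state constraint $d(\cdot,\pi(s))\le\epsilon$, minimising a terminal reward admits a stationary optimum.

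With this in place both implications are short. The displayed memoryless condition is literally $\max_{\pi\in M_i}\min_{\tau'\in\mathbb{M}_i,\,d(\pi,\tau')\le\epsilon}P(\tau')\le W_i$, that is $V^*(s_0)\le W_i$; so to conclude that $\sigma$ is an equilibrium under $\epsilon$-imprecise deviations for player~$i$, I take any mixed $\tau\in\mathbb{S}_i$, let the minimiser apply its memoryless optimal response to the centers that $\tau$ plays, and obtain a genuine $\tau'\in\mathbb{S}_i$ with $d(\tau,\tau')\le\epsilon$ and $P(\tau')\le V^*(s_0)\le W_i$, the required non-profitable neighbour. Conversely, if the memoryless condition fails then $V^*(s_0)>W_i$, and the maximiser's pure memoryless optimal center $\pi^*$ guarantees $P(\tau')\ge V^*(s_0)>W_i$ for \emph{every} $\tau'$ within $\epsilon$ of $\pi^*$; thus $\pi^*$ is a really-profitable deviation (pure memoryless, hence in particular in $\mathbb{S}_i$) and $\sigma$ is not an equilibrium under $\epsilon$-imprecise deviations. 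Running this for every player $i$ yields the stated equivalence.

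I expect the main obstacle to be justifying the stochastic-game facts about $\mathcal H$, precisely because the payoff map $\tau\mapsto P(\tau)$ is \emph{discontinuous}: this is exactly the cycling phenomenon motivating the paper, where a deviation keeping the play in a cycle yields $0$ while an arbitrarily small perturbation can force termination. Consequently the existence and the attainment of the value cannot be read off from a compactness-plus-continuity argument, but must come from the optimality-equation (or strategy-improvement) theory for perfect-information terminal-reward games, which resolves the cycling discontinuities at the extreme stationary strategies. The second delicate point, treated by the one-step computation above together with the optimality equation, is that the maximiser's optimal center is genuinely \emph{pure}, since this is what licenses restricting the deviations on the right-hand side to $M_i$ rather than merely to $\mathbb{M}_i$, while the minimiser's optimum is only stationary, matching the witness class $\mathbb{M}_i$.
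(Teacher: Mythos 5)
Your proposal follows essentially the same route as the paper: reduce the question, for each player~$i$ and the induced one-player game $\comp{\calG}{\sigma}{i}$, to a two-player zero-sum \emph{turn-based} game in which the deviator announces a local distribution and an adversary then perturbs it within distance~$\epsilon$, and then invoke determinacy with pure memoryless optimal strategies to get both the restriction of centers to $M_i$ and the restriction of perturbations to $\mathbb{M}_i$. Your value $V^*$ is exactly the quantity of Corollary~\ref{cor:dgvalue}, and your two closing implications are the paper's Corollary~\ref{cor:stateq}. The one place where you diverge is also the one genuine gap: your game $\mathcal H$ has a \emph{continuum} of maximiser actions (all of $\Dist(\Allow_i(s))$) and, for each announced center, a continuum of minimiser actions, and you justify its value, its optimality equation and the existence of memoryless optimal strategies by appeal to ``classical results on stochastic games with terminal/reachability payoffs''. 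Those classical results (in particular the Liggett--Lippman theorem~\cite{LL69} that the paper uses) are stated for \emph{finite} turn-based stochastic games; for compact-action perfect-information games with undiscounted terminal rewards, existence of optimal (rather than $\epsilon$-optimal) memoryless strategies is not off-the-shelf, and your own remark about the discontinuity of $\tau\mapsto P(\tau)$ is precisely the reason one cannot wave at compactness. Your purity-of-the-center argument cannot repair this as written, because it presupposes the optimality equation for the infinite game, which is what is in question.

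The paper's construction $\dg{\comp{\calG}{\sigma}{i}}{\varepsilon}$ is engineered exactly to avoid this circularity: the deviator's announcement is discretized into finitely many interval ranges per state (the $\epsilon$-balls around the probabilities $0$, $\epsilon$, $1-\epsilon$, $1$ of an action, in the two-action case), the perturber's continuous choice inside a range is realized by \emph{mixing} the two endpoint actions, and an arbitrary center is recovered as a convex combination of the announcements (Fig.~\ref{fig-leqeps}). The resulting game is finite, so pure memoryless determinacy applies directly, and the correspondence lemma then transports pure memoryless profiles of $\dg{\calG}{\varepsilon}$ back to pairs (pure memoryless center, stationary perturbation) in $\calG$. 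To make your argument sound you should perform the same move: restrict the maximiser to finitely many announcements \emph{before} invoking determinacy (your one-step Dirac computation is then the verification that this restriction is without loss for the sup), and realize the minimiser's polytope $\{\nu: d(\nu,\mu)\le\epsilon\}$ by mixtures of its finitely many vertices. With that reorganization your proof matches the paper's.
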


We~prove this lemma by considering an intermediate two-player game to represent
deviations of Player~$i$ and their counter-deviations at distance~$\epsilon$.

The notion of equilibria under imprecise deviation has been introduced in
a very general setting with arbitrarily complex strategies and deviations.
An important step when proving existence of stationary
equilibria is to check that one can restrict ourselves to deviations that
are also stationary. Intuitively, one can even wonder if we can, as in the
case of Nash Equilibria, only consider pure memoryless deviations, that will
be imprecise up to $\varepsilon$, hence leading to stationary deviations, but
in finite number.

Let $\calG$ a game, $\sigma$ a stationary strategy profile and
$i\in\Agt$ a player.  We write~$\comp{\calG}{\sigma}{i}$ for the
$1$-player game obtained from~$\calG$ by assigning to all players, but
player~$i$, her strategy in~$\sigma$.  Note that for any
$\sigma'_i\in\mathbb{S}_i$, we~have
$\Esp_{\calG}^{\sigma[i/\sigma'_i]}(\phi_i\mid s) =
\Esp_{\comp{\calG}{\sigma}{i}}^{\sigma'_i}(\phi_i\mid s)$ In the
following, we are mainly interested in the possible
$\varepsilon$-imprecise deviations of player $i$ alone in this new
game.

\medskip In order to make the reduction clear, we consider in
the following the particular case of games where each player is allowed at
most two actions. When exactly two distinct actions are allowed,
they will be noted $a$ and~$b$. The general case will be discussed in
remark~\ref{rq:dgblow}.

For a stationary profile~$\sigma$, we consider the $1$-player
game~$\comp{\calG}{\sigma}{i}$ as
defined above (with Player~$i$ alone, all other strategies being fixed) and
construct a $2$-player 
turn-based game with an additional antagonistic Player~$\hat{i}$, whose role is
to ``change'' the strategy of Player~$i$ by a distance at most~$\varepsilon$.
Formally, for any state~$s$ where Player~$i$ has two allowed actions~$a$
and~$b$ (resulting in distributions $\delta(s,a)$ and~$\delta(s,b)$, resp.),
we~modify the game as follows:
\makeatletter
\def\low{\@ifnextchar2{[0,2\epsilon]\@gobble}{[0,\epsilon]}}
\def\high{\@ifnextchar2{[1-2\epsilon,1]\@gobble}{[1-\epsilon,1]}}
\makeatother
\begin{itemize}
\item from~$s$, Player~$i$ is given the opportunity to move to one of the
following four states: $(s,\low)$, $(s,\low2)$, $(s,\high2)$ and $(s,\high)$. 
\item from each state $(s,[\alpha,\beta])$, Player~$\hat i$ has two
actions, leading to
distributions~$\alpha\cdot\Tab(s,a)+(1-\alpha)\cdot\Tab(s,b)$ 
and~$\beta\cdot\Tab(s,a)+(1-\beta)\cdot\Tab(s,b)$, respectively.
If~Player~$\hat i$ plays action~$a$ with probability~$p$, then the final
distribution is $[p\alpha+(1-p)\beta]\cdot \Tab(s,a) + 
[(1-p)(1-\alpha)+p(1-\beta)]\cdot \Tab(s,b)$.
\end{itemize}

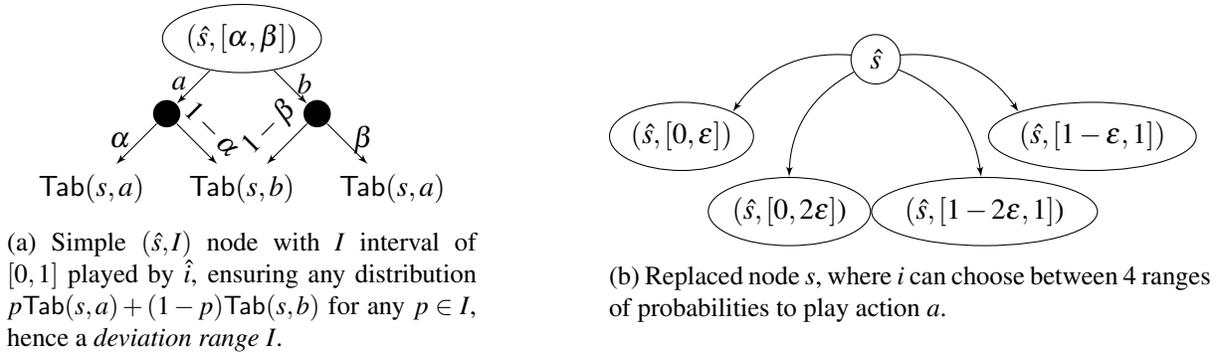
\begin{figure}[bt]
  \begin{subfigure}{0.39\textwidth}
    \centering
    \begin{tikzpicture}
      \node[rn, ellipse, inner xsep=0] (s0) {$(\hat{s},[\alpha,\beta])$};
      \node[below of=s0, node distance=1cm] (mid) {};
      \node[stoch, left of=mid] (beta) {};
      \node[stoch, right of=mid] (alpha) {};
      \node[below of=beta, node distance=1cm] (midbeta) {};
      \node[left of=midbeta] (beta1) {$\Tab(s,a)$};
      \node[below of=alpha, node distance=1cm] (midalpha) {};
      \node[right of=midalpha] (alpha2) {$\Tab(s,a)$};
      \node[] (sb) at ($(beta1)!0.5!(alpha2)$) {$\Tab(s,b)$};
      \path (s0) edge[-latex'] node[left, pos=0.4] {$a$} (beta)
                 edge[-latex'] node[right, pos=0.4] {$b$} (alpha);
      \path (beta) edge[-latex'] node[left, pos=0.4] {$\alpha$} (beta1)
                   edge[-latex'] node[above, sloped] {$1-\alpha$} (sb);
      \path (alpha) edge[-latex'] node[right, pos=0.4] {$\beta$} (alpha2)
                   edge[-latex'] node[above, sloped] {$1-\beta$} (sb);

    \end{tikzpicture}
    \caption{Simple $(\hat{s},I)$ node with $I$ interval of $[0,1]$
        played by $\hat{i}$, ensuring any distribution
        \mbox{$p\Tab(s,a)+(1-p)\Tab(s,b)$}
        for any $p\in I$, hence a \emph{deviation range} $I$.}
  \end{subfigure}\hfill
  \begin{subfigure}{0.5\textwidth}
  \begin{tikzpicture}
    \node[rn] (s0) {$\hat s$};
    \node[below of=s0, node distance=0.7cm] (mid) {};
    \node[left of=mid, node distance=1.9cm] (sa) {};
    \node[rn,ellipse, inner xsep=0,anchor=north east] (s1) at (sa) {$(\hat{s},[0,\varepsilon])$};
    \node[right of=mid, node distance=1.9cm] (sb) {};
    \node[rn,ellipse, inner xsep=0,anchor=north west] (s2) at (sb) {$(\hat{s},[1-\varepsilon,1])$};
    \node[rn,ellipse, inner xsep=0,anchor=north west] (s3) at ($(sa)!0.60!(sb)-(0,1cm)$) {$(\hat{s},[1-2\varepsilon,1]$)};
    \node[rn,ellipse, inner xsep=0,anchor=north east] (s4) at ($(sb)!0.60!(sa)-(0,1cm)$) {$(\hat{s},[0,2\varepsilon]$)};
    \path (s0) edge[-latex', bend right]
    (s1.north east)
    edge[-latex', bend right] 
    (s4)
    edge[-latex', bend left] 
    (s3)
    edge[-latex', bend left] 
    (s2.north west);
  \end{tikzpicture}
  \caption{Replaced node $s$, where $i$ can choose between $4$ ranges of
  probabilities to play action $a$.}
  \end{subfigure}
  \caption{Translation of a node $s$ with allowed action $a$ and $b$ to $\widehat{a}$.}
\end{figure}

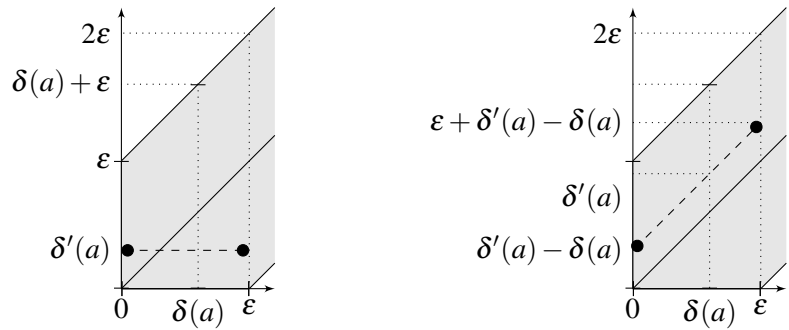
\begin{figure}[bt]
\centering
\begin{tikzpicture}[scale=1.7]
\begin{scope}
\draw[latex'-latex'] (1.2,0) -| (0,2.2);
\fill[black!10!white] (1,0) -- (1.2,.2) -- (1.2,2.2) -- (0,1) |- (1,0);
\draw[|-|] (0,0) -- (1,0);
\draw[|-|] (0,0) -- (0,1) node[left] {$\epsilon$};
\draw[|-|] (0,0) node[below] {$0$} -- (1,0) node[below] {$\epsilon$};
\draw (0,0) -- (1.2,1.2);
\draw[dotted] (1,0) -- (1,2) -- (0,2) node[left] {$2\epsilon$};
\draw (0,1) -- (1.2,2.2);
\draw (1,0) -- (1.2,.2);
\draw[|-|,dotted] (.6,0) node[below] {$\delta(a)$} -- (.6,1.6);
\draw[dotted] (.6,1.6) -- (0,1.6) node[left] {$\delta(a)+\epsilon$};
\draw[dashed,*-*] (0,.3) node[left] {$\delta'(a)$} -- (1,.3);
\end{scope}

\begin{scope}[xshift=4cm]
\draw[latex'-latex'] (1.2,0) -| (0,2.2);
\fill[black!10!white] (1,0) -- (1.2,.2) -- (1.2,2.2) -- (0,1) |- (1,0);
\draw[|-|] (0,0) -- (1,0);
\draw[|-|] (0,0) -- (0,1);
\draw[|-|] (0,0) node[below] {$0$} -- (1,0) node[below] {$\epsilon$};
\draw (0,0) -- (1.2,1.2);
\draw[dotted] (1,0) -- (1,2) -- (0,2) node[left] {$2\epsilon$};
\draw (0,1) -- (1.2,2.2);
\draw (1,0) -- (1.2,.2);
\draw[|-|,dotted] (.6,0) node[below] {$\delta(a)$} -- (.6,1.6);
\draw[dotted] (.6,1.6) -- (0,1.6); 
\draw[dashed,*-*] (0,.3) node[left] {$\delta'(a)-\delta(a)$} -- (1,1.3);
\draw[dotted] (1,1.3) -- (0,1.3)  node[left] {$\epsilon+\delta'(a)-\delta(a)$};
\draw[dotted] (.6,.9) -- (0,.9) node[below left] {$\delta'(a)$};
\end{scope}
\end{tikzpicture}
\caption{Intuition of the construction for $\delta(a)\leq \epsilon$: seeing
  $\delta(a)$ as a convex combination of~$0$ and~$\epsilon$, we~obtain
  $\delta'(a)$ as the same convex combination of the black dots.}

\label{fig-leqeps}
\end{figure}

For a $1$-player game $\calG$ for~$i$, we denote by $\dg{\calG}{\varepsilon}$
the previous transformation.
Our aim is to have a correspondence between (stochastic) moves of Player~$i$
from~$s$ in~$\calG$, and her move from the corresponding state~$\hat s$
in~$\dg\calG\varepsilon$. Our notion of correspondence is defined as follows:
\begin{tdefinition}
  Let $\sigma_i,\sigma'_i\in\mathbb{S}$ two strategies for the $1$-player game
  $\calG$ (played by $i$) such that $d(\sigma_i,\sigma'_i)\leq \varepsilon$,
  and $\hat{\sigma}$ a strategy profile in $\dg{\calG}{\varepsilon}$. We say
  that $(\sigma_i,\sigma'_i)$ corresponds to $\hat{\sigma}$ if the following
  holds for any history~$\hat{h}$ ending in state~$s$ of
  $\dg{\calG}{\varepsilon}$: 
  \[ 
  \Tab(s,\sigma'_i(\pi_\States(\hat
  h))) \equiv \widehat{\Tab}(s,\hat{\sigma}(\hat h)) 
  \]
  where $\pi_\States(h)$ the projection on the letters corresponding to the
  original states $\States$.
\end{tdefinition}

We now explicit explicit the purpose of the construction by establishing a
correspondence between strategies in the original game and strategies
in our $2$-player version. 
\begin{restatable}{tlemma}{lemmaxix}
  For any $\sigma_i$ strategy of~$\calG$, there exists a strategy
  $\hat{\sigma}_i$ 
  in $\dg{\calG}{\varepsilon}$ for player~$i$, such that, for any strategy
  $\sigma'_i$ of~$\calG$
  such that $d(\sigma_i,\sigma'_i)\leq \varepsilon$,
  there exists $\hat{\sigma}_{\hat{i}}$ such that
  $(\sigma_i,\sigma'_i)$ corresponds to $\hat{\sigma}$.

  Moreover, any pure memoryless strategy profile of~$\dg{\calG}{\varepsilon}$
  corresponds to some pair of strategies $(\sigma_i,\sigma'_i)$ in~$\calG$
  where $\sigma_i$ is pure memoryless and $\sigma'_i$ is stationary.
\end{restatable}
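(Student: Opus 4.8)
The plan is to prove the two statements separately, in both cases by unfolding the definition of correspondence into the single requirement that, at every history~$\hat h$ ending in an original state~$s$, the effective weight put on~$\Tab(s,a)$ in $\dg\calG\varepsilon$ (equivalently, the one-step distribution $\widehat{\Tab}(s,\hat\sigma(\hat h))$) equals $\sigma'_i(\pi_\States(\hat h))(a)$. I work, as in the construction, with the two-action case, the general case of remark~\ref{rq:dgblow} being obtained by applying the same arguments range-by-range. The point to keep in mind is that the four states $(s,[0,\varepsilon])$, $(s,[0,2\varepsilon])$, $(s,[1-2\varepsilon,1])$ and $(s,[1-\varepsilon,1])$ are exactly the clipped $\varepsilon$-balls $[\max(0,c-\varepsilon),\min(1,c+\varepsilon)]$ of the four ``corner'' intended weights $c\in\{0,\varepsilon,1-\varepsilon,1\}$, and that from a state $(s,[\alpha,\beta])$ Player~$\hat i$ can realise, by mixing her two actions, \emph{any} effective weight in the whole interval $[\alpha,\beta]$.

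For the first statement, the geometric heart is the observation suggested by Figure~\ref{fig-leqeps}: for every intended weight $c=\sigma_i(h)(a)\in[0,1]$, the clipped ball $B(c)=[\max(0,c-\varepsilon),\min(1,c+\varepsilon)]$ can be written as $\bigl[\sum_k q_k\alpha_k,\ \sum_k q_k\beta_k\bigr]$ for some distribution $(q_k)_k$ over the four ranges $[\alpha_k,\beta_k]$. I would check this by the three cases $c\le\varepsilon$, $\varepsilon\le c\le 1-\varepsilon$ and $c\ge 1-\varepsilon$: in the first case $B(c)=[0,c+\varepsilon]$ is obtained by mixing $[0,\varepsilon]$ and $[0,2\varepsilon]$ with weights $1-c/\varepsilon$ and $c/\varepsilon$ (this is precisely ``$c$ as a convex combination of~$0$ and~$\varepsilon$''); the middle case mixes $[0,2\varepsilon]$ and $[1-2\varepsilon,1]$, which automatically produces a width-$2\varepsilon$ interval whose endpoints can be tuned to $c-\varepsilon$ and $c+\varepsilon$; the last case is symmetric to the first. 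Defining $\hat\sigma_i$ at each history~$\hat h$ by the distribution $(q_k)_k$ attached to $c=\sigma_i(\pi_\States(\hat h))(a)$ gives a strategy depending only on~$\sigma_i$, as required. Now fix any $\sigma'_i$ with $d(\sigma_i,\sigma'_i)\le\varepsilon$; then $\delta'(a)=\sigma'_i(\pi_\States(\hat h))(a)\in B(c)$. Since the effective weights $\sum_k q_k v_k$ achievable by Player~$\hat i$ (with $v_k\in[\alpha_k,\beta_k]$ chosen independently at each reached range) fill exactly $B(c)$, I can select at every history values $v_k$ with $\sum_k q_k v_k=\delta'(a)$; the resulting $\hat\sigma_{\hat i}$ makes $(\sigma_i,\sigma'_i)$ correspond to~$\hat\sigma$.

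For the second statement, given a pure memoryless profile of $\dg\calG\varepsilon$, at each original state~$s$ Player~$i$ selects a single range $[\alpha_s,\beta_s]$ and Player~$\hat i$ a single endpoint, producing an effective weight $w_s\in\{\alpha_s,\beta_s\}$; setting $\sigma'_i(s)(a)=w_s$ defines a stationary $\sigma'_i$ corresponding to $\hat\sigma$ by construction, and it remains to exhibit a pure memoryless $\sigma_i$ with $d(\sigma_i,\sigma'_i)\le\varepsilon$. For the narrow ranges $[0,\varepsilon]$ and $[1-\varepsilon,1]$ this is immediate, since their endpoints $\{0,\varepsilon\}$ and $\{1-\varepsilon,1\}$ lie within~$\varepsilon$ of the pure weights~$0$ and~$1$. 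The delicate point, which I expect to be the main obstacle, is the two wide ranges $[0,2\varepsilon]$ and $[1-2\varepsilon,1]$: their middle-facing endpoints $2\varepsilon$ and $1-2\varepsilon$ sit at distance $2\varepsilon$ from \emph{every} pure weight whenever $\varepsilon<1/3$, so no pure $\sigma_i$ is within~$\varepsilon$ of such a $\sigma'_i$. I would dispose of this by a monotonicity argument on Player~$i$'s side: a range $[\alpha,\beta]$ is exactly the set over which the minimising Player~$\hat i$ optimises, so enlarging it can only lower the value Player~$i$ guarantees; as $[0,\varepsilon]\subseteq[0,2\varepsilon]$ and $[1-\varepsilon,1]\subseteq[1-2\varepsilon,1]$, each wide range is dominated by the narrow one sharing its extreme endpoint, and one may therefore restrict attention to profiles using only narrow ranges — which is precisely what the downstream use (an optimal positional profile) provides. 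The narrow case then yields the required pure~$\sigma_i$. The remaining routine work is the history/projection bookkeeping in the first statement, and checking in remark~\ref{rq:dgblow} that the convex-combination and domination arguments survive the blow-up to more than two declared ranges when $|\Act|>2$.
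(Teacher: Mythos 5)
Your proof of the first part is correct and follows the paper's own route: for each intended weight $c=\sigma_i(\pi_\States(\hat h))(a)$ you realise the clipped ball $[\max(0,c-\varepsilon),\min(1,c+\varepsilon)]$ as a convex combination of the four ranges (this is precisely the picture of Fig.~\ref{fig-leqeps}), the mixing coefficients depend only on $\sigma_i$, and Player~$\hat i$ then recovers any prescribed $\delta'(a)$ in the ball by moving the same parameter along each selected range. The three-case analysis, and the computation showing that mixing $[0,2\varepsilon]$ with $[1-2\varepsilon,1]$ always yields a width-$2\varepsilon$ interval with tunable left endpoint, are both right.

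For the second part, the obstacle you point out is real and cannot be brushed aside: if Player~$i$ purely selects $(s,[0,2\varepsilon])$ and Player~$\hat i$ purely selects the endpoint $2\varepsilon$, the correspondence forces $\sigma'_i(s)(a)=2\varepsilon$ as soon as $\Tab(s,a)\neq\Tab(s,b)$, and for $\varepsilon<1/3$ no pure memoryless $\sigma_i$ is within distance $\varepsilon$ of that $\sigma'_i$. Consequently your argument does not establish the lemma for \emph{all} pure memoryless profiles of $\dg{\calG}{\varepsilon}$, only for those in which Player~$i$ uses the narrow ranges. Your domination remark ($[0,\varepsilon]\subseteq[0,2\varepsilon]$, so the minimiser is weaker on the narrow range) is the right repair for the downstream use in Corollary~\ref{cor:dgvalue}, where only an optimal positional strategy of Player~$i$ matters; but you should then (i)~explicitly weaken the second claim to profiles avoiding the wide ranges (or to some profile realising the value), and (ii)~actually justify that replacing a wide range by the narrow one with the same outer endpoint preserves global optimality --- local improvement of the one-step value does not automatically yield global optimality in a game where non-termination has payoff~$0$ and the optimality equations may admit several solutions. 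Finally, the extension of both the convex-combination step and the domination step to the blown-up construction of Remark~\ref{rq:dgblow} is asserted but not checked.
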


The constructed game is a turn-based stochastic game with a quantitative
terminal reachability objective, which can be interpreted as a special case of
limit-average objective. Hence, thanks to a result of~\cite{LL69}, such a game
is determined with pure memoryless optimal strategies for both players.

As a consequence of this construction, we can infer two possible characterizations
of imprecise deviations in stationary profiles:
\begin{tcorollary}
  \label{cor:dgvalue}
  The value of $\dg{\calG}{\varepsilon}$ at state $\widehat{s}$ can be
  expressed as the following quantity on game $\calG$:
  \[
  \sup_{\sigma\in M^\calG}\inf_{ \substack{\sigma'\in
      \mathbb{M}^\calG \\d(\sigma,\sigma')\leq \varepsilon} }
  \Esp^{\sigma'}(\phi_i \mid s)
  \]
\end{tcorollary}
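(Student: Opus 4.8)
The plan is to read the value of $\dg{\calG}{\varepsilon}$ off its minimax characterisation and to match the two required inequalities against the two halves of the correspondence lemma just established. Since $\dg{\calG}{\varepsilon}$ is a two-player turn-based stochastic game whose terminal-reachability reward is a special limit-average objective, the result of~\cite{LL69} applies: the game is determined, its value at~$\hat s$ exists, and both the maximiser (Player~$i$) and the antagonist (Player~$\hat i$) have optimal \emph{pure memoryless} strategies. Hence
\[
\mathrm{val}(\dg{\calG}{\varepsilon},\hat s) = \sup_{\hat\sigma_i}\ \inf_{\hat\sigma_{\hat i}}\ \Esp^{(\hat\sigma_i,\hat\sigma_{\hat i})}(\phi_i\mid \hat s),
\]
where both extrema may be restricted to pure memoryless strategies and are attained. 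Writing $f(\sigma) = \inf_{\sigma'\in\mathbb{M},\, d(\sigma,\sigma')\leq\varepsilon}\Esp^{\sigma'}(\phi_i\mid s)$, the goal is to show $\mathrm{val}(\dg{\calG}{\varepsilon},\hat s) = \sup_{\sigma\in M}f(\sigma)$, which I would establish by two matching inequalities.

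For the lower bound $\mathrm{val}\geq\sup_{\sigma\in M}f(\sigma)$, I fix a pure memoryless $\sigma\in M$ and take the Player-$i$ strategy $\hat\sigma_i^\sigma$ provided by the first part of the lemma. The point is that against $\hat\sigma_i^\sigma$ the antagonist realises \emph{exactly} the $\varepsilon$-perturbations of~$\sigma$: every stationary~$\sigma'$ with $d(\sigma,\sigma')\leq\varepsilon$ is matched by some $\hat\sigma_{\hat i}$ with equal payoff, via the correspondence identity $\Tab(s,\sigma'_i(\cdot))\equiv\widehat{\Tab}(s,\hat\sigma(\cdot))$; and conversely, because $\sigma$ is pure the relevant ranges are $[0,\varepsilon]$ and $[1-\varepsilon,1]$, so every antagonist move produces a distribution within~$\varepsilon$ of the nominal one. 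Consequently $\inf_{\hat\sigma_{\hat i}}\Esp^{(\hat\sigma_i^\sigma,\hat\sigma_{\hat i})}(\phi_i\mid\hat s) = f(\sigma)$, whence $\mathrm{val}\geq f(\sigma)$, and taking the supremum over $\sigma\in M$ yields the bound.

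For the upper bound $\mathrm{val}\leq\sup_{\sigma\in M}f(\sigma)$, I invoke determinacy to fix an optimal pure memoryless Player-$i$ strategy $\hat\sigma_i^\ast$, against which an optimal pure memoryless response $\hat\sigma_{\hat i}^\ast$ attains the value. The resulting profile is pure memoryless, so the \emph{moreover} part of the lemma yields a pair $(\sigma^\ast,\sigma'^\ast)$ with $\sigma^\ast\in M$, $\sigma'^\ast\in\mathbb{M}$, $d(\sigma^\ast,\sigma'^\ast)\leq\varepsilon$, and $\mathrm{val}=\Esp^{\sigma'^\ast}(\phi_i\mid s)$. Since $\hat\sigma_{\hat i}^\ast$ minimises and its moves correspond to $\varepsilon$-perturbations of the nominal~$\sigma^\ast$, the perturbation $\sigma'^\ast$ is a worst one, i.e.\ $\Esp^{\sigma'^\ast}(\phi_i\mid s)=f(\sigma^\ast)\leq\sup_{\sigma\in M}f(\sigma)$, which closes the argument.

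The step I expect to be the main obstacle is exactly the claim that the antagonist's realisable moves coincide with the $\varepsilon$-perturbations of the nominal strategy, in both directions. The delicate point is the role of the two intermediate ranges $[0,2\varepsilon]$ and $[1-2\varepsilon,1]$: a pure memoryless Player-$i$ strategy could a priori select them and thereby hand the minimising antagonist perturbations of amplitude up to~$2\varepsilon$, which would break the tight nominal-to-pure correspondence. I~would resolve this by a monotonicity observation: because the antagonist minimises, enlarging a deviation range can only decrease Player~$i$'s guaranteed payoff, so an optimal Player~$i$ never prefers $[0,2\varepsilon]$ over $[0,\varepsilon]$ (nor $[1-2\varepsilon,1]$ over $[1-\varepsilon,1]$), letting me assume $\hat\sigma_i^\ast$ uses only the tight ranges and hence carries a genuine pure memoryless nominal whose adversary power is precisely the $\varepsilon$-ball. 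Verifying this geometric fact (the ``lens'' of Fig.~\ref{fig-leqeps}) together with the correspondence identity is where the real work lies; once it is in place, the corollary follows from~\cite{LL69} and the lemma by the minimax bookkeeping above.
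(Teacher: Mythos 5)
Your proposal is correct and follows essentially the same route as the paper: determinacy of $\dg{\calG}{\varepsilon}$ with pure memoryless optimal strategies via~\cite{LL69}, combined with the two halves of the correspondence lemma to match the sup over $M^\calG$ with Player~$i$'s strategies and the inf over the $\varepsilon$-ball with Player~$\hat i$'s strategies. You also correctly isolate the one genuinely delicate point --- that the antagonist's realisable perturbations against the strategy induced by a pure nominal are exactly the tight ranges $[0,\epsilon]$ and $[1-\epsilon,1]$, so the wider $2\epsilon$ ranges cannot distort either inequality --- which is precisely what the construction and Fig.~\ref{fig-leqeps} are designed to guarantee.
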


\begin{tcorollary}
  \label{cor:stateq}
  Let $\sigma\in\mathbb{M}^\calG$ a stationary strategy profile in
  $\calG$.
  $\sigma$ is an equilibrium under $\epsilon$-imprecise deviations
  from state $s_0$, if and only if:
  \[\forall i\in\Agt.\
  \forall \sigma'_i \in M_i^\calG.\ \exists \sigma''_i \in
  \mathbb{M}_i^\calG\ \text{s.t.}\ \Esp^{\sigma[i/
    \sigma''_i]}(\phi_i\mid s_0) \leq \Esp^{\sigma}(\phi_i\mid s_0) \
  \text{and}\ d(\sigma'_i,\sigma''_i) \leq \epsilon\]
\end{tcorollary}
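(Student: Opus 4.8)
The plan is to prove the equivalence one player at a time, by recasting the restriction of the equilibrium condition to a single player~$i$ as one inequality on the value of the two‑player game $\dg{(\comp{\calG}{\sigma}{i})}{\varepsilon}$, and then to expand that value through Corollary~\ref{cor:dgvalue}. First I would fix $i\in\Agt$, set $r_i=\Esp^\sigma(\phi_i\mid s_0)$ for the payoff of the profile, and form the one‑player game $\comp{\calG}{\sigma}{i}$ with its transform $\dg{(\comp{\calG}{\sigma}{i})}{\varepsilon}$. The key observation is that negating the definition of an equilibrium under $\epsilon$‑imprecise deviations shows that $\sigma$ \emph{fails} the condition at~$i$ exactly when player~$i$ has a \emph{robustly profitable} deviation~$\sigma'_i$, namely one for which every $\sigma''_i$ with $d(\sigma'_i,\sigma''_i)\le\epsilon$ satisfies $\Esp^{\sigma[i/\sigma''_i]}(\phi_i\mid s_0)>r_i$.

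Next I would rewrite ``no robustly profitable deviation for~$i$'' as the inequality
$\sup_{\sigma'_i\in\mathbb{S}_i}\inf_{\sigma''_i\in\mathbb{S}_i,\,d(\sigma'_i,\sigma''_i)\le\epsilon}\Esp^{\sigma[i/\sigma''_i]}(\phi_i\mid s_0)\le r_i$
and identify the left‑hand side with the value $v_i$ of $\dg{(\comp{\calG}{\sigma}{i})}{\varepsilon}$ at~$\hat s_0$. This identification is exactly where the correspondence lemma (relating strategies of $\comp{\calG}{\sigma}{i}$ and of $\dg{(\comp{\calG}{\sigma}{i})}{\varepsilon}$) is used: its first part lets Player~$i$ simulate an arbitrary target~$\sigma'_i$ by a strategy of $\dg{(\comp{\calG}{\sigma}{i})}{\varepsilon}$, against which Player~$\hat i$ can realize \emph{precisely} the strategies $\sigma''_i$ at distance at most~$\epsilon$, so Player~$i$'s guaranteed value under that simulation equals the inner infimum; taking the supremum over targets yields~$v_i$. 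Since $\dg{(\comp{\calG}{\sigma}{i})}{\varepsilon}$ is a turn‑based stochastic game determined with pure memoryless optimal strategies (Liggett--Lippman~\cite{LL69}), Player~$\hat i$ has an attained optimal response to each simulated~$\sigma'_i$; this attainment is what makes ``every $\sigma''_i$ yields $>r_i$'' equivalent to ``the infimum exceeds $r_i$'', giving cleanly that $\sigma$ satisfies the condition at~$i$ if and only if $v_i\le r_i$.

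Finally I would feed Corollary~\ref{cor:dgvalue} into this inequality, which rewrites $v_i$ as $\sup_{\sigma'_i\in M_i}\inf_{\sigma''_i\in\mathbb{M}_i,\,d(\sigma'_i,\sigma''_i)\le\epsilon}\Esp^{\sigma[i/\sigma''_i]}(\phi_i\mid s_0)$. As $M_i$ is finite and, for each $\sigma'_i\in M_i$, the set $\{\sigma''_i\in\mathbb{M}_i : d(\sigma'_i,\sigma''_i)\le\epsilon\}$ is a compact set of stationary strategies on which the terminal‑reward payoff is continuous, the supremum and the inner infimum are both attained; hence $v_i\le r_i$ unfolds exactly to $\forall\sigma'_i\in M_i\ \exists\sigma''_i\in\mathbb{M}_i:\ d(\sigma'_i,\sigma''_i)\le\epsilon\wedge\Esp^{\sigma[i/\sigma''_i]}(\phi_i\mid s_0)\le r_i$. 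Collecting these equivalences over all $i\in\Agt$ yields the corollary.

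The step I expect to be the main obstacle is the attainment used in the ``$\Leftarrow$'' direction at the boundary case $v_i=r_i$: for a \emph{history‑dependent} deviation~$\sigma'_i$ one must exhibit an actual counter‑deviation $\sigma''_i\in\mathbb{S}_i$ within distance~$\epsilon$ whose payoff does not exceed~$r_i$, rather than merely an infimizing sequence. I would handle this through the determinacy result together with the correspondence lemma, which guarantee that Player~$\hat i$ has an optimal response realizing such a~$\sigma''_i$, and I would deliberately avoid a direct compactness argument on the non‑compact space of general strategies, where semicontinuity of the terminal‑reward payoff under \emph{signed} rewards is delicate.
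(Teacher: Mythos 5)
Your overall route is the same as the paper's: reduce the equilibrium condition for player~$i$ to the single inequality $v_i\le r_i$ on the value of $\dg{\comp{\calG}{\sigma}{i}}{\varepsilon}$, using the correspondence lemma in one direction and the Liggett--Lippman determinacy with pure memoryless optimal strategies in the other, and then unfold $v_i$ through Corollary~\ref{cor:dgvalue}. That is exactly how the paper derives this corollary from the deviation-game construction, and your handling of the boundary case $v_i=r_i$ via the globally optimal pure memoryless strategy of Player~$\hat i$ (rather than a compactness argument over arbitrary strategies) is the right move.

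There is, however, one step that fails as written: in the last paragraph you justify attainment of the inner infimum $\inf\{\Esp^{\sigma[i/\sigma''_i]}(\phi_i\mid s_0) : \sigma''_i\in\mathbb{M}_i,\ d(\sigma'_i,\sigma''_i)\le\epsilon\}$ by claiming that the terminal-reward payoff is \emph{continuous} on this compact set of stationary strategies. It is not: the payoff of a stationary profile jumps whenever the support of the strategy changes the set of states from which $\Final$ is reached --- this is precisely the discontinuity the paper exhibits on the games of Fig.~\ref{fig:nonash} when explaining why the best-response graph is not closed, and the set $\{d(\sigma'_i,\sigma''_i)\le\epsilon\}$ does contain such support boundaries (e.g.\ $\sigma''_i(s)(a)=0$ while $\sigma'_i(s)(a)\le\epsilon$). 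With signed rewards the payoff is not even semicontinuous in the useful direction, so compactness alone does not give you a minimizer. The attainment is nonetheless true, and you should get it from the same source you use everywhere else: once $\sigma'_i\in M_i$ is fixed, its simulation in $\dg{\comp{\calG}{\sigma}{i}}{\varepsilon}$ is pure memoryless, so Player~$\hat i$ faces a \emph{finite} MDP with a terminal-reward (limit-average) objective and has an optimal pure memoryless strategy; by the second part of the correspondence lemma this realizes a stationary $\sigma''_i$ within distance $\epsilon$ achieving the infimum. Replace the continuity claim by this argument and the proof is sound.
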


\begin{remark}
\label{rq:dgblow}
One can notice the construction of the deviation game and inferred results
have been applied to nodes with two allowed actions only. In fact, the same
reasoning can be generalized to an arbitrary number of allowed actions at the
expense of an exponential blowup: player $i$ has to announce simultaneously,
for each allowed action $a$, if its probability in the expected distribution
will be larger than $\varepsilon$ and/or smaller than $1-\varepsilon$.
Note however that for a given fixed bound on the number of actions, the size
of $\dg{\calG}{\varepsilon}$ is still polynomial.
\end{remark}

\subsection{Existence of equilibria under imprecise deviations}
We are now ready to prove Theorem~\ref{theo:existence}, that is, for
every $\epsilon>0$, the existence of a (stationary) equilibrium under
$\epsilon$-imprecise deviations from any state of stochastic
concurrent games with terminal-reward payoffs.

Our proof will rely on the following well-known fixed-point theorem,
that we will apply to a well-adapted sets of strategy profiles.
\begin{ttheorem}[\cite{kakutani1941}]
  \label{theo:kakutani}
  Let $X$ be a non-empty, compact and convex subset of some Euclidean
  space.  Let $f\colon X \rightarrow 2^X$ be a set-valued function on
  $X$ with a closed graph and the property that $f(x)$ is non-empty
  and convex for all $x \in X$. Then $f$ has a fixed point.
\end{ttheorem}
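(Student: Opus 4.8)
The plan is to derive Kakutani's theorem from Brouwer's fixed-point theorem by approximating the correspondence $f$ by a sequence of continuous single-valued maps, applying Brouwer to each, and passing to a limit; the closed-graph and convex-values hypotheses are exactly what turns the limit into a genuine fixed point of $f$.

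First I would reduce to the case where $X$ is a simplex. If $S \supseteq X$ is a simplex and $r \colon S \to X$ denotes the continuous nearest-point retraction onto the closed convex set $X$, then $y \mapsto f(r(y))$ is a correspondence on $S$ with non-empty convex values and closed graph, mapping into $X \subseteq S$; any of its fixed points $y$ satisfies $y \in f(r(y)) \subseteq X$, hence $r(y) = y$ and $y \in f(y)$. So I may assume $X$ is a simplex, and for each integer $n \geq 1$ I fix a simplicial subdivision $T_n$ of $X$ whose mesh tends to $0$ as $n \to \infty$. At every vertex $v$ of $T_n$ I use non-emptiness of $f(v)$ to pick $g_n(v) \in f(v) \subseteq X$, and extend $g_n$ affinely over each simplex of $T_n$. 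The map $g_n \colon X \to X$ is continuous, and its image lies in $X$ since $X$ is convex and each $g_n(v) \in X$. By Brouwer's theorem there is $x_n \in X$ with $g_n(x_n) = x_n$.

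Next I would pass to a limit. By compactness of $X$, along a subsequence $x_n \to x^\star \in X$. For each $n$ the point $x_n$ lies in some simplex of $T_n$ with vertices $v_0^n, \dots, v_d^n$ and barycentric coordinates $\lambda_0^n, \dots, \lambda_d^n \geq 0$ summing to $1$, so that $x_n = \sum_j \lambda_j^n v_j^n$ and, by the affine extension, $x_n = g_n(x_n) = \sum_j \lambda_j^n\, g_n(v_j^n)$ with each $g_n(v_j^n) \in f(v_j^n)$. Because the mesh of $T_n$ tends to $0$, every vertex satisfies $v_j^n \to x^\star$. Extracting further, I may assume $\lambda_j^n \to \lambda_j$ and $g_n(v_j^n) \to y_j$ for each $j$; since $v_j^n \to x^\star$ while $(v_j^n, g_n(v_j^n))$ lies in the graph of $f$, closedness of that graph yields $y_j \in f(x^\star)$.

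Finally, taking limits in $x_n = \sum_j \lambda_j^n\, g_n(v_j^n)$ gives $x^\star = \sum_j \lambda_j y_j$, a convex combination of points of $f(x^\star)$; by convexity of $f(x^\star)$ this combination lies in $f(x^\star)$, so $x^\star \in f(x^\star)$. I expect the main obstacle to be this concluding limit argument, where one must control simultaneously the convergence of the barycentric coordinates and of the selected values $g_n(v_j^n)$ at vertices that are themselves moving toward $x^\star$; it is precisely here that both hypotheses are used in an essential way --- closedness of the graph to place each limit $y_j$ in $f(x^\star)$, and convexity of the values to keep their combination inside $f(x^\star)$. A technically lighter variant avoids triangulation altogether by constructing, for each $n$, a continuous $\tfrac1n$-approximate selection whose graph lies within $\tfrac1n$ of that of $f$; Brouwer again supplies approximate fixed points, and the identical closed-graph-plus-convexity passage to the limit finishes the proof.
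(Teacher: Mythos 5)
Your proof is correct and is essentially Kakutani's original argument: reduce from a general non-empty compact convex set to a simplex via the nearest-point retraction, build piecewise-affine selections on subdivisions of vanishing mesh, apply Brouwer to each, and pass to the limit using the closed graph to place the limit vertices' values in $f(x^\star)$ and convexity of the values to close the convex combination. The paper itself offers no proof of this statement---it imports the theorem directly from \cite{kakutani1941}---so your argument coincides with the cited source rather than with anything in the paper, and all steps (continuity of the affine extension, compactness extractions of the barycentric coordinates and vertex values, and the fixed-point transfer through the retraction) check out.
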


A Nash equilibrium~$\sigma$ can be characterized as containing, for
each player $i$, the best response~$\sigma_i$ to the strategies of the
other players. This can be expressed as a fixed point of the
\emph{best-response function} (\cite{nash50}).
Nevertheless, over game graphs, continuity of this best-response
function is not ensured. More precisely, the graph of the function is
not closed.  Let us consider for example game of Figure~\ref{exLF},
and write any stationary strategy profile $\sigma$ in this game as the
tuple $(\sigma_1(s \mid 1),\sigma_2(s \mid 2))$.  Then, if one player
decides to stop the game with any positive probability, the other
player has all incentive to purely continue the game, until reaching the
terminal state (with probability), hence: $\BR((x,y)) = \{(0,0)\}$ for
every $x,y>0$, where $\BR$ denotes the best-response function.
However, if the other player purely continues the game, the only way
to win some positive payoff $1/3$ is to play the stopping action with
positive probability, hence: $\BR((0,0)) = \{(x,y) \mid x,y>0\}$.  We
conclude that the graph is not closed, so Theorem~\ref{theo:kakutani}
cannot apply to the classical $\BR$ function. This is not surprising
as we know that Nash equilibria need not always exist (recall the
example given in Figure~\ref{exHoR}). On the other hand,
in~\cite{CJM04}, stationary $\epsilon$-Nash equilibria are
characterized as fixed points of the best-response function.

In the following we will see that the (standard) best-response
function will fit well in our setting.
  \begin{tdefinition}
    We consider $T\subseteq \mathbb{M}$ a subset of stationary
    strategy profiles.  Let $\BR_T\colon T \rightarrow 2^{T}$ with
	\[
        \BR_T(\sigma) = \left\{ \sigma'\in T~\middle|~ \forall
          i\in\Agt.\ \forall s\in\States.\ \sigma'_i \in
          \argmax_{\sigma''_i~\text{s.t.}~\sigma'[i/\sigma''_i]\in
            T}~ \Esp^{\sigma[i/\sigma''_i]}(\phi_i \mid s) \right\}\]
  \end{tdefinition}
  Note that $\BR_{\mathbb{M}}$ is the usual notion of best response
  function.

  \begin{tlemma}
    For every $0<\epsilon\leq \frac1{|\Act|}$ and $\calA$ cycle-free,
    $\BR_{\Delta_\epsilon}$ has a fixed point.
  \end{tlemma}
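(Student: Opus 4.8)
The plan is to apply Kakutani's fixed-point theorem (Theorem~\ref{theo:kakutani}) with $X = \Delta_\epsilon$ and $f = \BR_{\Delta_\epsilon}$. First I~would observe that $\Delta_\epsilon$ is a suitable domain: a~stationary profile is a point of a finite-dimensional Euclidean space (one coordinate per triple of a player, a state and an action), and the stationarity and distribution constraints, together with the inequalities $\sigma_i(s)(a)\geq\epsilon$ for $(a,i,s)\in\exit(C)$ that define $\Delta_\epsilon$, are all closed affine conditions. Hence $\Delta_\epsilon$ is compact and convex, and it is non-empty by the preceding lemma (as $\epsilon\leq 1/|\Act|$). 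It~then remains to check that $\BR_{\Delta_\epsilon}(\sigma)$ is non-empty and convex for every $\sigma$, and that $\BR_{\Delta_\epsilon}$ has a closed graph.

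The crucial enabling fact throughout is Proposition~\ref{thmTerm}: for every $\tau\in\Delta_\epsilon$ the game terminates almost surely, at a uniform geometric rate over $\Delta_\epsilon$. Fix $\sigma\in\Delta_\epsilon$ and a player~$i$. Since $\Delta_\epsilon$ is defined by per-(player, state) constraints, $\sigma[i/\sigma''_i]\in\Delta_\epsilon$ holds exactly when $\sigma''_i$ satisfies player~$i$'s own exiting constraints; the constraint in the $\argmax$ therefore decouples across players and one may reason player by player. The one-player game $\comp{\calG}{\sigma}{i}$, restricted to such $\sigma''_i$, is a transient total-reward Markov decision process in which \emph{every} policy is proper (by Proposition~\ref{thmTerm}, as the resulting profile lies in $\Delta_\epsilon$), with compact convex action sets $D_i(s)=\{\delta\in\Dist(\Allow_i(s))\mid \delta(a)\geq\epsilon \text{ for } (a,i,s)\in\exit(C)\}$. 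Standard MDP theory then yields an optimal value vector $V^*$ solving the Bellman optimality equations and a greedy stationary policy attaining $V^*(s)$ from every state $s$ simultaneously; such a policy lies in $\BR_{\Delta_\epsilon}(\sigma)$, proving non-emptiness.

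For convexity, I~would use that, since every proper policy's evaluation operator has a unique fixed point, a~constrained stationary strategy for player~$i$ attains $V^*$ from all states \emph{iff} it is greedy with respect to $V^*$ at every state, i.e. $\sigma'_i(s)\in\argmax_{\delta\in D_i(s)} Q^*(s,\delta)$ for all $s$, where the one-step look-ahead value $Q^*(s,\cdot)$ is \emph{affine} in $\delta$. The $\argmax$ of an affine function over the convex set $D_i(s)$ is convex, so the set of uniformly optimal strategies for player~$i$ is a product over states of convex sets, hence convex; and $\BR_{\Delta_\epsilon}(\sigma)$ is the product over players of these sets, hence convex.

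Finally, for the closed-graph condition I~would first show that $\sigma\mapsto\Esp^{\sigma}(\phi_i\mid s)$ is continuous on $\Delta_\epsilon$: writing it as $\sum_{f\in\Final}\nu(f)\,\pr^\sigma(\text{reach }f\mid s)$ and truncating the reachability probabilities at $k\cdot n$ steps gives polynomials in the entries of~$\sigma$, while Proposition~\ref{thmTerm} bounds the truncation error uniformly by $p^n\to 0$ over $\Delta_\epsilon$, exhibiting the expectation as a uniform limit of continuous functions. With continuity in hand, closedness follows from a Berge-type argument: if $\sigma^{(n)}\to\sigma$ and $\tau^{(n)}\in\BR_{\Delta_\epsilon}(\sigma^{(n)})$ with $\tau^{(n)}\to\tau$, then passing to the limit in $\Esp^{\sigma^{(n)}[i/\tau^{(n)}_i]}(\phi_i\mid s)\geq \Esp^{\sigma^{(n)}[i/\rho_i]}(\phi_i\mid s)$ for each fixed $\rho_i$ shows $\tau_i$ is optimal for $\sigma$, while $\tau\in\Delta_\epsilon$ since $\Delta_\epsilon$ is closed. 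Kakutani's theorem then yields a fixed point. I~expect the convexity of $\BR_{\Delta_\epsilon}(\sigma)$ to be the main obstacle: the map $\sigma''_i\mapsto\Esp^{\sigma[i/\sigma''_i]}(\phi_i\mid s)$ is not concave in the full stationary strategy, so convexity cannot be read off directly and must be extracted through the greedy characterization — which is exactly what the terminating property of $\Delta_\epsilon$ (and the resulting properness of all policies) guarantees, repairing the non-closedness of the unconstrained best-response map illustrated on Fig.~\ref{exLF}.
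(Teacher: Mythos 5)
Your proof is correct and follows essentially the same route as the paper: Kakutani's theorem applied to the compact convex product set $\Delta_\epsilon$, continuity of the payoff map obtained from the uniform termination bound of Proposition~\ref{thmTerm}, a limiting argument for the closed graph, and convexity of the best-response sets derived from the uniqueness of the solution of the evaluation equations under almost-sure termination. The only cosmetic difference is in the convexity step, where you argue via the greedy characterization of optimal policies (a per-state $\argmax$ of an affine map over a convex action set), whereas the paper directly checks that a convex combination of two best responses satisfies the same Bellman system as the common optimal value vector and invokes uniqueness; both arguments rest on exactly the same properness fact supplied by $\Delta_\epsilon$.
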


  \begin{proof}
    We apply Theorem~\ref{theo:kakutani}.
    \begin{itemize}
    \item 
    First notice that
    $T=\Delta_\epsilon$ can be viewed as a non-empty compact convex
    subset of $\mathbb{R}^N$ where $N=\Act\times\Agt\times \States$.
    Moreover, $T$ can be decomposed in a product
    of individual strategy sets for each player $T=T_1\times \hdots T_{|\Agt|}$
    where
    \[
      \forall i\in\Agt~
      T_i = \left\{ \sigma_i~\middle|~
        \forall (a,s)~(a,i,s)\in\exit(C)\Rightarrow
        \sigma_i(s)(a) \geq \epsilon
      \right\}
    \]
    Hence, for every $(\sigma,\sigma')\in T^2$, and $i\in \Agt$, we
    still have $\sigma[i/\sigma'_i] \in T$.  
    \item
    Let $k$ and $p$ be the constants appearing in the
    statement of Proposition~\ref{thmTerm}.
    For every $n \ge 0$, we define
    $g_{n}$ for the function assigning to every pair
    of strategy profiles $(\sigma,\sigma')\in T^2$
    the following vector value in $\mathbb{R}^{\Agt\times \States}$:
    \[
    \left( \sum_{j=0}^{k \cdot n} \sum_{f \in \Final}
      \pr^{\sigma[i/\sigma'_i]} \bigl( (\States \setminus \Final)^{j}
      \cdot f^\omega \mid s\bigr) \cdot \nu_i(f) \right)_{i\in\Agt,
      s\in \States}
    \]
    Then, we obviously see that for every $(i,s) \in \Agt \times
    \States$, $\lim_{n \to \infty} g_{n}(\sigma,\sigma')_{i,s} =
    \Esp^{\sigma[i/\sigma'_i]}(\phi_i \mid s)$.  Furthermore, as an
    application of Proposition~\ref{thmTerm}, we get:
    \[
     |\Esp^{\sigma[i/\sigma'_i]}(\phi_i \mid s) - g_{n}(\sigma,\sigma')_i |
     \le K \cdot p^n
    \]
    where $K=\max_{i\in \Agt,f \in \Final} |\nu_i(f)|$.  This implies
    that the above convergence is indeed uniform, and that $g_\infty:
    (\sigma,\sigma') \mapsto \left(\Esp^{\sigma[i/\sigma'_i]}(\phi_i
      \mid s)\right)_{i,s}$ is therefore continuous on $T^2$.  
    \item
        Let us now show that the graph of $\BR_T$ is closed. In order
        to do so, we consider a converging sequence of strategy
        profiles $(\sigma^k)_{k>0}$ with limit $\sigma^\infty$ and for
        each $k>0$, $\sigma'^k\in\BR_T(\sigma^k)$ converging to
        $\sigma'^\infty$.  We will prove that
        $\sigma'^\infty\in\BR_T(\sigma^\infty)$.  For a fixed
        $\sigma'$, we have $ \Esp^{\sigma^k[i/\sigma'_i]}(\phi_i \mid
        s) \leq \Esp^{\sigma^k[i/\sigma'^k_i]}(\phi_i \mid s)$, hence
        by continuity, $ \Esp^{\sigma^\infty[i/\sigma'_i]}(\phi_i \mid
        s) \leq \Esp^{\sigma^\infty[i/\sigma'^\infty_i]}(\phi_i \mid
        s)$.  
    \item
        It remains to show that $\BR_T(\sigma)$ is convex.  We fix
        $i\in \Agt$ and show that $(\BR_T(\sigma))_i$ is convex hence
        the result.  Let $0<\lambda<1$ and $\sigma',\sigma'' \in
        \BR_T(\sigma)$: this means that both vectors
        $(\Esp^{\sigma[i/\sigma'_i]}(\phi_i \mid s))_s$ and
        $(\Esp^{\sigma[i/\sigma''_i]} (\phi_i \mid s))_s$ are maximal,
        and equal to some vector $m_i$.  Indeed, if two different
        maximal vectors exists, we take the combined strategy that
        uses best action in each state, this new strategy is still in
        $T_i$.

        By convexity of $T=\Delta_\epsilon$, $\sigma^\lambda =
        \sigma[i/\lambda \cdot\sigma'_i + (1-\lambda) \cdot
        \sigma''_i] \in T$, so $\forall s$,
        $\pr^{\sigma^\lambda}(\States^* \Final \mid s) = 1$.  This
        implies that the payoff vector $(\Esp^{\sigma^\lambda}(\phi_i
        \mid s))_{s}$ is the \emph{unique} solution of the equation
        \[\left\{
          \begin{aligned}
            &\forall f\in \Final&~ \Esp^{\sigma^\lambda}(\phi_i \mid f) &= \nu_i(f) \\
            &\forall s\notin \Final&~ \Esp^{\sigma^\lambda}(\phi_i
            \mid s) &=
            \sum_{s'}\Tab(s,\sigma^\lambda(s))(s')\Esp^{\sigma^\lambda}(\phi_i
            \mid s')\\
            &&&=
            \sum_{s'}\left[
                \lambda\Tab(s,\sigma[i/\sigma'_i](s))
                +
                (1-\lambda)\Tab(s,\sigma[i/\sigma''_i](s))
            \right](s')\cdot
                \Esp^{\sigma^\lambda}(\phi_i\mid s')
    \end{aligned}
    \right.
    \]
    On the other hand, $m_i$ satisfies the following equation:
    \[\left\{
    \begin{aligned}
      &\forall f\in \Final&~ m_{i,f} &= \nu_i(f) \\
      &\forall s\notin \Final&~
        m_{i,s} &= 
        \sum_{s'}\Tab(s,\sigma[i/\sigma'_i](s))(s')m_{i,s'}
      =
        \sum_{s'}\Tab(s,\sigma[i/\sigma''_i](s))(s')m_{i,s'}\\
    \end{aligned}
    \right.
    \]
    We can check that
    $(\Esp^{\sigma^\lambda}(\phi_i \mid s))_{s\in \States} = m_i$
    is a valid solution, hence the actual value, so
    $\sigma^\lambda_i\in \BR_T(\sigma)_i$.
    \popQED
   \end{itemize}
  \end{proof}

  Thanks to Corollary~\ref{cor:stateq} (stationary deviations), and this
  fixed-point theorem, we infer the following proposition:
  \begin{tproposition}
    If $0< \epsilon \le \frac{1}{|\Act|}$ and $\calA$ is cycle-free,
    then there exists
    $\sigma \in \Delta_{\epsilon}$ fixed point of
    $\BR_{\Delta_{\epsilon}}$ which is an equilibrium under
    $\epsilon$-imprecise deviations from every state $s$ of~$\calG$.
  \end{tproposition}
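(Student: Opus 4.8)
The plan is to take the fixed point furnished by the preceding lemma and verify, through Corollary~\ref{cor:stateq}, that it meets the definition of an equilibrium under $\epsilon$-imprecise deviations. First I would apply the lemma above to obtain some $\sigma\in\Delta_\epsilon$ with $\sigma\in\BR_{\Delta_\epsilon}(\sigma)$. Unfolding the definition of $\BR_{\Delta_\epsilon}$, and using that $\Delta_\epsilon=T_1\times\dots\times T_{|\Agt|}$ is a product of the individual constraint sets $T_i$ (so that the side condition $\sigma[i/\sigma''_i]\in\Delta_\epsilon$ is equivalent to $\sigma''_i\in T_i$), the fixed-point property reads: for every player $i$, every state $s$, and every \emph{stationary} deviation $\sigma''_i\in T_i$, one has $\Esp^{\sigma[i/\sigma''_i]}(\phi_i\mid s)\le\Esp^{\sigma}(\phi_i\mid s)$. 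In words, no deviation that respects the $\epsilon$-exiting lower bounds can improve player~$i$'s payoff, and this holds uniformly from every state.

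Next, since $\sigma$ is stationary, Corollary~\ref{cor:stateq} reduces the equilibrium condition to a statement quantified only over \emph{pure memoryless} deviations: it suffices to prove that for every $i$ and every $\sigma'_i\in M_i$ there is a stationary $\sigma''_i$ with $d(\sigma'_i,\sigma''_i)\le\epsilon$ and $\Esp^{\sigma[i/\sigma''_i]}(\phi_i\mid s_0)\le\Esp^{\sigma}(\phi_i\mid s_0)$. The key idea is to instantiate this counter-deviation \emph{inside} $T_i$, so that the displayed fixed-point inequality applies directly. Thus everything hinges on the following claim: given a pure memoryless $\sigma'_i$, one can build a stationary $\sigma''_i\in T_i$ with $d(\sigma'_i,\sigma''_i)\le\epsilon$.

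For that construction I would proceed state by state. At a state $s$ where $\sigma'_i$ deterministically plays some action $c$, if every exiting triple $(a,i,s)\in\exit(C)$ already gets probability at least $\epsilon$ under $\sigma'_i(s)$ (in particular when there is no exiting action at $s$, or when $c$ itself is the unique exiting action) I keep $\sigma''_i(s)=\sigma'_i(s)$; otherwise I reserve probability exactly $\epsilon$ for each required exiting action and leave the remaining mass on $c$. The hypothesis $\epsilon\le 1/|\Act|$ guarantees this is a genuine distribution lying in $T_i$. In the two-action setting developed in the reduction this only shifts the probability of a single action by at most $\epsilon$, so $d(\sigma'_i(s),\sigma''_i(s))\le\epsilon$; taking the supremum over states yields $d(\sigma'_i,\sigma''_i)\le\epsilon$, and the arbitrary-action case is recovered through the encoding of Remark~\ref{rq:dgblow}. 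As $\sigma''_i\in T_i$, the fixed-point inequality gives $\Esp^{\sigma[i/\sigma''_i]}(\phi_i\mid s_0)\le\Esp^{\sigma}(\phi_i\mid s_0)$, which is exactly the condition of Corollary~\ref{cor:stateq}; and because that inequality in fact holds from every state $s$, the profile $\sigma$ is an equilibrium under $\epsilon$-imprecise deviations from every state.

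I expect the main obstacle to be precisely this state-by-state construction of the nearby constrained strategy $\sigma''_i$: one must simultaneously meet all lower bounds dictated by $\exit(C)$ at a state while keeping the $\sup$-distance to the original pure memoryless deviation within~$\epsilon$. This is transparent when at most two actions are allowed, but when several actions must simultaneously be forced above $\epsilon$ at one state it is exactly the point that requires the announce-per-action blow-up of Remark~\ref{rq:dgblow}; everything else is a routine combination of the fixed-point lemma, the product structure of $\Delta_\epsilon$, and Corollary~\ref{cor:stateq}.
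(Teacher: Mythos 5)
Your overall architecture is exactly the paper's: the paper derives this proposition in one line from the fixed-point lemma together with Corollary~\ref{cor:stateq}, and you correctly identify that the only substantive content to supply is the claim that every pure memoryless deviation $\sigma'_i\in M_i$ admits a counter-deviation $\sigma''_i\in T_i$ with $d(\sigma'_i,\sigma''_i)\le\epsilon$, after which the fixed-point inequality closes the argument. The product decomposition of $\Delta_\epsilon$ and the ``from every state'' conclusion are handled correctly.

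However, the state-by-state construction of $\sigma''_i$ has a genuine gap, and it is not the one you point to. Suppose at some state $s$ the pure deviation plays a Dirac on an action $c$, while two distinct actions $a_1\neq a_2$ (both different from $c$) satisfy $(a_1,i,s)\in\exit(C_1)$ and $(a_2,i,s)\in\exit(C_2)$ for some strong components $C_1,C_2\in\SC$ containing~$s$. Membership in $T_i$ forces $\sigma''_i(s)(a_1)\ge\epsilon$ and $\sigma''_i(s)(a_2)\ge\epsilon$, while the distance bound forces $\sigma''_i(s)(a_j)\le\epsilon$ (since $\sigma'_i(s)(a_j)=0$) and $\sigma''_i(s)(c)\ge 1-\epsilon$; the total mass is then at least $1+\epsilon>1$, so \emph{no} element of $T_i$ lies within distance $\epsilon$ of $\sigma'_i$ at that state. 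Your own construction exhibits the symptom: reserving $\epsilon$ for each of $m\ge 2$ required actions drains $m\epsilon>\epsilon$ from $c$, violating the supremum distance. The hypothesis $\epsilon\le 1/|\Act|$ only guarantees that $T_i$ is non-empty, not that it meets the $\epsilon$-ball around an arbitrary Dirac. Invoking Remark~\ref{rq:dgblow} does not repair this: that remark concerns the encoding of the two-player deviation game $\dg{\calG}{\epsilon}$ for more than two actions, not the feasibility of a nearby constrained counter-deviation. In the two-actions-per-state setting that the paper develops in detail, at most one action ever needs to be raised from $0$ to $\epsilon$ and the other retains mass $1-\epsilon\ge\epsilon$, so your construction is sound there; but for three or more actions the reduction to ``find $\sigma''_i\in T_i$'' cannot work as stated, and one would instead have to argue directly (e.g.\ via Corollary~\ref{cor:dgvalue} and the adversary $\hat i$ in the deviation game, or by treating separately the deviations whose $\epsilon$-ball misses $T_i$) that such deviations do not yield a really-profitable improvement over the fixed point.
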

  The general Theorem~\ref{theo:existence} follows immediately for any
  $\epsilon>0$ and any arena, thanks to Proposition~\ref{prop:cyclefree}.

\section{Computing stationary equilibria under imprecise
  deviations}

We describe a polynomial-space algorithm for computing stationary
equilibria under imprecise deviations for
  non-negative terminal reward games.
A~similar proof for Nash equilibria in turn-based stochastic games is
given in~\cite{UW11}. We~briefly describe the later proof, which will
help understanding our current encoding.

\looseness=-1
  The algorithm proceeds by encoding a Nash Equilibrium as an
  existential first-order formula over the reals, which satisfiability
  can be decided in \PSPACE. The formula
  quantifies over all stationary strategy profiles and payoffs at each
  state, and checks that:
  \begin{enumerate}
  \item the strategy profile~$\sigma$ under consideration is properly defined;
  \item the payoff in each state corresponds to the real payoff of the
    strategy profile;
  \item for any~$i$, Player~$i$ cannot benefit from deviating in
    $\comp{\calG}{\sigma}{i}$.
  \end{enumerate}
  These properties cannot, \emph{in general}, be expressed locally,
  but in the setting of~\cite{UW11}, one can first, non-deterministically, guess
  the support of the strategy.
  On~the one hand, this allows us to compute (in~linear time) the set of
  states from which $\Final$ is never reached. Those states have payoff~$0$
  for all agents, and the payoff in the other states (from which $\Final$ is
  reachable with some positive probability) can be expressed
  as a combination of the payoff values of the successor states and
  the (local) strategy profile.
  On~the other hand, we~can also compute (still in linear time) the set of
  states that are reachable from~$s_0$. It is easy to see that Player~$i$
  has an incentive to deviate if, and only~if, her payoff can be increased by
  deviating locally from such a reachable state. Hence we can express stability
  of the Nash Equilibrium as a (polynomial size) conjunction of inequalities.

  Another way of expressing this stability property is by saying that for any
  Player~$i$, $s_0$~should yield a payoff in the equilibrium that is larger
  than the optimal value~$v_i(s_0)$ in the Markov decision process
  representing the possible deviations of Player~$i$, namely
  $\comp{\calG}{\sigma}{i}$. Since the initial guess can be done in \NPSPACE{}
  and the generated formula is of polynomial size, the whole algorithms runs
  in \PSPACE.

  In the case of equilibria under $\epsilon$-imprecise deviations, we apply a
similar technique but deviations are now to be considered as strategies for
Player~$i$ in $\dg{\comp{\calG}{\sigma}{i}}{\varepsilon}$ against the worst
strategies of Player~$\widehat{i}$.
In~fact,
we want to check that $s_0$ has a
payoff (in the equilibrium) larger for player $i$ than the maximal value she
could get by imprecisely deviating. Thanks to corollary~\ref{cor:dgvalue},
this optimal value is the same as in
$\calG_i=\dg{\comp{\calG}{\sigma}{i}}{\varepsilon}$, denoted by $v_{\varepsilon,i}(s)$.
In~order to compute these values for each
game~$\calG_i$, we~non-deterministically compute optimal strategies for
players~$i$ and~$\hat{i}$. These strategies can be supposed to be pure
memoryless. In~order to do~so, we~first guess a strategy for Player~$i$ in the
game game $\dg{\comp{\calG}{\sigma}{i}}{\varepsilon}$. Without knowing the
exact probability values of this game (which depends on~$\sigma$), we can
still derive its structure since the support is known, thus we can compute the
set of states for which Player~$\hat{i}$ can totally spoil $i$'s payoff, that~is,
enforce a non-terminating run; such a run has payoff~$0$, which is
optimal for
Player~$\hat{i}$. We~later guess a pure memoryless strategy for
Player~$\hat{i}$ keeping in mind that $\hat{i}$ has to play such a cycling
strategy from any state where she is able~to. From the other states, for which
Player~$i$ can still ensure positive probability to terminate, the value of the
game can again be expressed locally as a combination of the guessed strategy
profile and the values of the successor states. As~for the previous algorithm
for Nash Equilibrium in~$\calG$, the optimality of both strategies can be
expressed as stability by local deviations.
Finally, stability by imprecise deviations in~$\calG$ consists in coding
the fact that payoff in~$\calG$ for Player~$i$ should be larger than the optimal
value $v_{\varepsilon,i}(s_0)$.

We now make precise the result and the algorithm.

\begin{restatable}{ttheorem}{thmxxvii}
  Let $k>0$.
  Let $\calG = \tuple{ \calA,\phi_\nu }$ be
    a stochastic concurrent game with non-negative terminal
    rewards with $|\Act|\leq k$.
    Let $s_0 \in \States$ and $\epsilon>0$. For every $i \in
    \Agt$, we fix $x_i, y_i \in \mathbb{R}_+$ two real numbers. We can
    decide in \PSPACE{} whether there is a stationary equilibrium under
    $\epsilon$-imprecise deviations $\sigma$ from $s_0$, such that for
    every $i \in \Agt$,
    $x_i \le \Esp^\sigma(\phi_i \mid s_0) \le y_i$.

    \end{restatable}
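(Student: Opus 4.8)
The plan is to reduce the decision problem to the satisfiability of an existential first-order sentence over the ordered field of the reals, which can be decided in \PSPACE. Following the scheme of~\cite{UW11}, the free variables of the formula range over the probabilities $\sigma_i(s)(a)$ of a candidate stationary profile, over the induced payoffs $\Esp^\sigma(\phi_i\mid s)$ at every state, and over the values $v_{\varepsilon,i}(s)$ of the deviation games. By Corollary~\ref{cor:stateq}, a stationary profile~$\sigma$ is an equilibrium under $\epsilon$-imprecise deviations from~$s_0$ if and only if, for every player~$i$, its equilibrium payoff dominates the best payoff obtainable by an $\epsilon$-imprecise deviation; and by Corollary~\ref{cor:dgvalue}, this latter quantity is exactly the value $v_{\varepsilon,i}(s_0)$ of the two-player turn-based game $\dg{\comp{\calG}{\sigma}{i}}{\varepsilon}$. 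The sentence will therefore assert the existence of probabilities, payoffs and values satisfying: (i)~$\sigma$ is a well-formed stationary profile; (ii)~the $\Esp^\sigma(\phi_i\mid s)$ are its genuine payoffs; (iii)~the $v_{\varepsilon,i}(s)$ are the genuine values of the deviation games; and (iv)~the constraints $x_i\le \Esp^\sigma(\phi_i\mid s_0)\le y_i$ together with the equilibrium condition $v_{\varepsilon,i}(s_0)\le \Esp^\sigma(\phi_i\mid s_0)$ hold for every~$i$.

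The difficulty is that (ii) and (iii) are not expressible by purely local polynomial constraints, because the payoff of a state depends on whether $\Final$ is reachable, and the value of a deviation game depends on where Player~$\hat i$ can force non-termination. As in~\cite{UW11}, I~would resolve this by first nondeterministically guessing the discrete structure, starting with the support $\{(i,s,a)\mid \sigma_i(s)(a)>0\}$ of the profile; the constraints $\sigma_i(s)(a)\ge 0$, $\sum_{a\in\Allow_i(s)}\sigma_i(s)(a)=1$, and consistency with the guessed support then encode~(i). Once the support is fixed, a linear-time graph analysis identifies the set~$Z$ of states from which $\Final$ is unreachable; these receive payoff~$0$, and the payoffs of the remaining states are the unique solution of the local system $\Esp^\sigma(\phi_i\mid s)=\sum_{s'}\Tab(s,\sigma(s))(s')\cdot\Esp^\sigma(\phi_i\mid s')$ with boundary values $\nu_i(f)$ on~$\Final$. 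Uniqueness holds precisely because $Z$ has been removed first, so every remaining non-final state is transient; these equalities are encoded directly, giving~(ii).

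For (iii), fix~$i$ and consider $\calG_i=\dg{\comp{\calG}{\sigma}{i}}{\varepsilon}$. Since $|\Act|\le k$ is fixed, Remark~\ref{rq:dgblow} guarantees that $\calG_i$ has polynomial size, and its branching probabilities are polynomials in the guessed $\sigma_i(s)(a)$ and in~$\epsilon$; moreover $\calG_i$ is a turn-based stochastic game with terminal-reachability reward, hence determined with pure memoryless optimal strategies for both players (by~\cite{LL69}). I~would guess such strategies for Player~$i$ and Player~$\hat i$. Exploiting the nonnegativity of the rewards, Player~$\hat i$'s optimal response is to enforce a non-terminating (hence payoff-$0$) run wherever she can; this spoiling set is again computed by a graph analysis from the guessed supports, and its states are assigned value~$0$. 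From the remaining states termination is almost sure under optimal play, so $v_{\varepsilon,i}(s)$ is the unique solution of the local optimality equations, and optimality of the two guessed strategies is expressed as the usual local non-improvement inequalities (Player~$i$ cannot gain, and Player~$\hat i$ cannot lose, by switching action at a single state). This yields~(iii).

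Finally, the whole discrete guess (the support of~$\sigma$, the $2|\Agt|$ pure-memoryless strategies in the deviation games, and the associated unreachability/spoiling sets) is of polynomial size, hence doable in \NPSPACE; the resulting existential formula over the reals is also of polynomial size and is decided in \PSPACE. As $\NPSPACE=\PSPACE$, the overall procedure runs in~\PSPACE. The~main obstacle, and the part requiring the most care, is item~(iii): one must argue that the local value equations for $v_{\varepsilon,i}$ admit a \emph{unique} solution once the spoiling region is stripped away, and that the nonnegativity hypothesis genuinely forces cycling to be optimal for Player~$\hat i$, so that the guessed structure indeed pins down the true value of the imprecise-deviation game.
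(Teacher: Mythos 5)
Your proposal follows essentially the same route as the paper: guess the support of the stationary profile, encode the payoffs via local equations after stripping the states from which $\Final$ is unreachable, and handle imprecise deviations through the values $v_{\varepsilon,i}$ of the turn-based deviation games $\dg{\comp{\calG}{\sigma}{i}}{\varepsilon}$ (polynomial-sized since $|\Act|\le k$, determined with pure memoryless strategies by~\cite{LL69}), guessing optimal strategies for both players and the spoiling region where Player~$\hat i$ forces payoff~$0$, then expressing everything as an existential formula over the reals decidable in \PSPACE{} (with $\NPSPACE=\PSPACE$ absorbing the guesses). This matches the paper's argument, including the use of Corollaries~\ref{cor:dgvalue} and~\ref{cor:stateq} and the role of nonnegativity in making cycling optimal for Player~$\hat i$.
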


\begin{remark}
  The previous theorem can be applied to compute \emph{some} equilibria in the case of
  negative payoffs by considering the new payoff function $\nu' = \nu-\min\nu\geq 0$.
  However, $\phi'=\phi_\nu-\min\nu$ and $\phi_{\nu'}$ coincide only on runs that
  reach a final state since $\phi'$ assigns positive value $-\min\nu$ to
  non-terminating runs. A possible work-around is to first compute the cycle-free
  arena $\widetilde{\calA}$ and exiting conditions $\Delta_\epsilon$, which size
  is bounded by the number of pairs $(a,i,s)\in \Act\times\Agt\times\States$.
  Then we can apply the previous theorem on game
  $\langle \widetilde{\calA}, \phi_{\nu'}\rangle$ with the extra
  formula $\sigma\in\Delta_\epsilon$.
  Thanks to this last constraint, we ensure that the run always terminates,
  thus the payoff functions coincide. Finally we conclude the computation by
  applying proposition~\ref{prop:cyclefree} to get back an equilibrium on $\calG$.
\end{remark}

\bibliographystyle{eptcs}
\bibliography{bib-reduite}

\end{document}